\documentclass[12pt,a4paper]{article}
\usepackage[margin=2cm]{geometry}
\usepackage{caption}
\captionsetup[table]{skip=5pt}

\usepackage{amsmath,amsthm,amssymb,bm,hhline,psfrag,comment}
\usepackage{subfigure}
\usepackage[hyphens]{url}
\usepackage{color}
\usepackage{proof}
\usepackage[dvipdfmx]{graphicx}
\usepackage[authoryear,round]{natbib}

\def\hsymbu#1{\smash{\lower1.7ex\hbox{\huge$#1$}}}

\usepackage{indentfirst}
\usepackage{algorithm,algorithmic}
\usepackage{appendix}
\usepackage[normalem]{ulem}
\usepackage{nomencl}


\theoremstyle{plain}
\newtheorem{lemma}{Lemma}[section]
\theoremstyle{remark}

\theoremstyle{example}

\theoremstyle{lemma}
\newtheorem{prop}{Proposition}[section]
\newtheorem{Assumption}{Assumption}[section]
\theoremstyle{Theorem}
\newtheorem{Theorem}{Theorem}[section]

\newcommand{\argmin}{\mathop{\rm argmin}}
\newcommand{\argmax}{\mathop{\rm argmax}}
\newcommand{\as}{{\ \xrightarrow{\rm a.s.}} \ }

\begin{document}
{
\begin{center}
\textbf{\Large Hierarchical clustered multiclass discriminant analysis via cross-validation
}
\end{center}
\begin{center}
\large {Kei Hirose$^{1,2}$, Kanta Miura$^{1}$ and Atori Koie$^{3}$
}
\end{center}

\begin{flushleft}
{\footnotesize
$^1$ Institute of Mathematics for Industry, Kyushu University, 744 Motooka, Nishi-ku, Fukuoka 819-0395, Japan \\
$^2$ RIKEN Center for Advanced Intelligence Project, 1-4-1 Nihonbashi, Chuo-ku, Tokyo 103-0027, Japan \\
$^3$ Nissan Motor Co., Ltd., 1-1, Morinosatoaoyama, Atsugi, Kanagawa 243-0123, Japan \\
}
 {\small E-mail: {\it hirose@imi.kyushu-u.ac.jp} (K.H.); {\it kantamiura903@gmail.com} (K.M.)}
\end{flushleft}

\vspace{1.5mm}

\begin{abstract}
Linear discriminant analysis (LDA) is a well-known method for multiclass classification and dimensionality reduction.  However, in general, ordinary LDA does not achieve high prediction accuracy when observations in some classes are difficult to be classified.  This study proposes a novel cluster-based LDA method that significantly improves the prediction accuracy.  We adopt hierarchical clustering, and the dissimilarity measure of two clusters is defined by the cross-validation (CV) value. Therefore, clusters are constructed such that the misclassification error rate is minimized.  Our approach involves a heavy computational load because the CV value must be computed at each step of the hierarchical clustering algorithm.  To address this issue, we develop a regression formulation for LDA and construct an efficient algorithm that computes an approximate value of the CV.  The performance of the proposed method is investigated by applying it to both artificial and real datasets.  Our proposed method provides high prediction accuracy with fast computation from both numerical and theoretical viewpoints.
\end{abstract}
 \noindent {\bf Key Words}:  Cross-validation; Linear discriminant analysis; Hierarchical clustering; Regression formulation

\section{Introduction}
\label{intro}
Linear discriminant analysis (LDA) is a well-known technique for multiclass classification problems based on the covariance structure of each class \citep{fisher1936use,rao1948utilization,fukunaga2013introduction}.  Furthermore, it is a dimension reduction tool for understanding the positional relationship among multiple classes in high-dimensional data. The implementation of LDA is easy because it results in a generalized eigenvalue problem.  Various extensions of LDA have been proposed in the literature.  For example, LDA can be extended to nonlinear analysis by kernel method (Fisher discriminant analysis; \citealp{mika1999fisher,baudat2000generalized}).  Sparse multiclass LDA  \citep{clemmensen2011sparse,shao2011sparse,witten2011penalized,safo2016general} is a useful technique for interpreting  the classification results for high-dimensional data. \citet{Wu:2017ia} introduced a hybrid version of LDA and deep neural networks, and applied it to person re-identification.  

In practice, observations in some classes are often easy to be classified, whereas those in other classes are difficult to be classified.  Figure \ref{fig:ldamouse} shows 2D projections of $36$-dimensional data points obtained by LDA of mouse consomic strain data \citep{takada2008mouse}.  The number of classes is 30.   Most species are similar; however, some species have different characteristics.   For example, classification of an input whose label is either ``MSM" or ``B6-11MSM" seems easy.  Meanwhile, inputs with other labels may be difficult to classify correctly; the data points in similar classes are not sufficiently separated in the two-dimensional space.   Therefore, ordinary LDA may lead to a high misclassification error rate.  
\begin{figure}
\centering
    \subfigure[2D plot of projected data points.]{\includegraphics[width=80mm, bb=0 0 566 425]{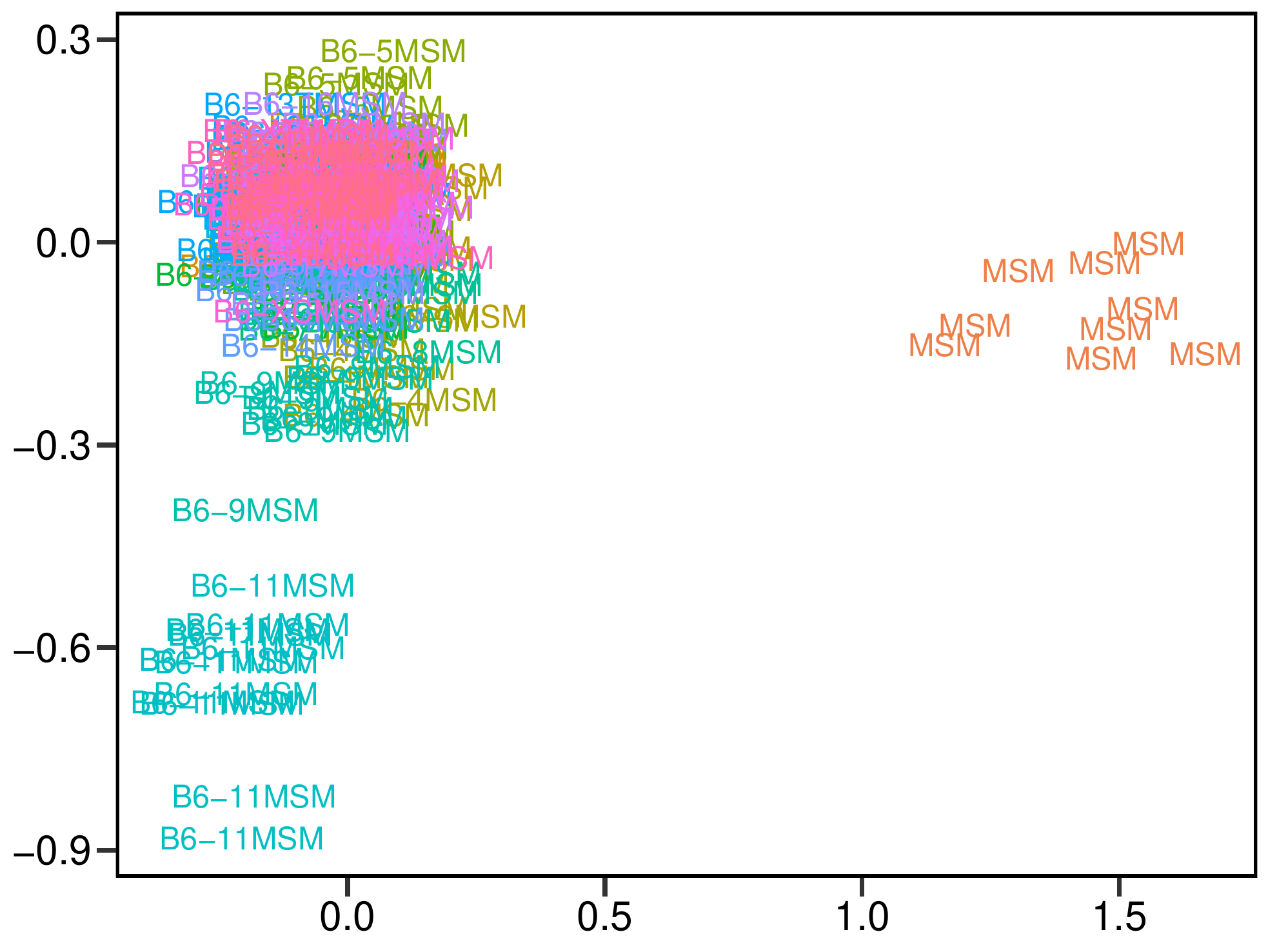}\label{fig:ldamouse}}
        \hspace{0.5cm}
    \subfigure[Dendrogram of Ward's hierarchical clustering. ]{\includegraphics[width=80mm, bb=0 0 566 425]{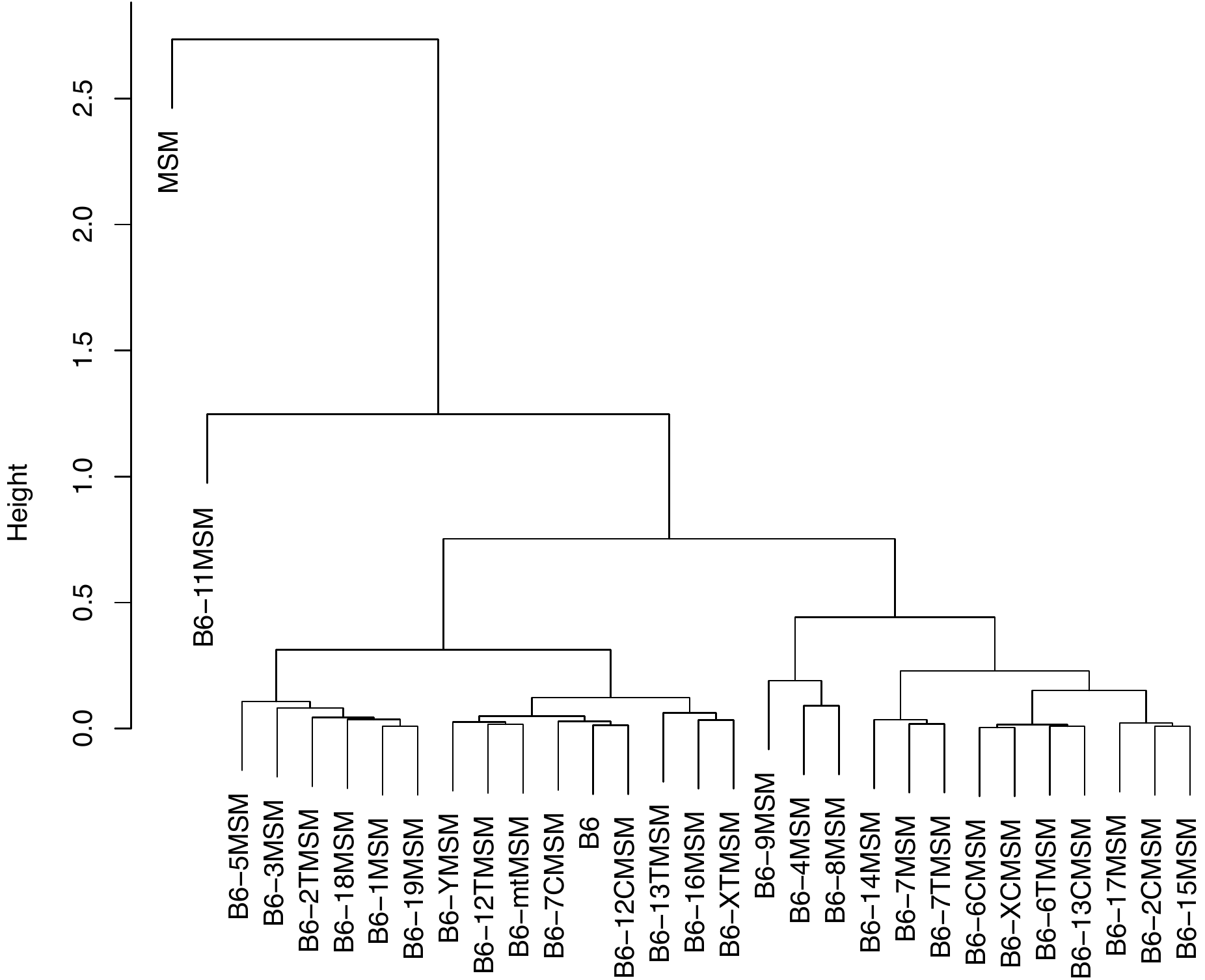}\label{fig:ldamouse_cluster} }
\caption{Two-dimensional projections of data points obtained by (a) LDA and (b) the dendrogram of Ward's hierarchical clustering for the mean value of each class.}
\end{figure}

To address this issue, one can adopt the following two-stage procedure. In the first stage, we calculate a representative point in each class (e.g., mean value) and apply standard cluster analysis (Ward's method, $k$-means clustering, etc.) to the representative points.  In the second stage, we separately apply LDA to each cluster.  With the two-stage procedure, each cluster has different classification boundaries, leading to more flexible boundaries compared to ordinary LDA. Thus, the misclassification error rate can be reduced.   Figure \ref{fig:ldamouse_cluster} shows the dendrogram of Ward's method for the mean vectors of the classes for the mouse consomic data.  The result suggests that three clusters can be constructed: ``MSM", ``B5-11MSM", and ``other 28 classes".  Thus, one can create the following classification rule: when a new input is observed, we may first classify one of these three clusters.  If the input is classified into ``other 28 classes", we classify one of the 28 classes.  A similar two-stage procedure based on linear and nonlinear discriminant functions has been proposed by \cite{Huang:2013is}.     

However, improvement of the prediction accuracy is not guaranteed with ordinary clustering algorithms, such as Ward's method.  In general, cluster analysis and classification problem have different purposes; cluster analysis is used to find collections of classes on the basis of the similarity of two classes (e.g., Euclidean distance and Mahalanobis distance), whereas LDA is conducted to construct the classification rule.  Combining two different methods does not usually guarantee minimization of the overall error rate (e.g., \citealp{yamamoto2014functional,kawano2015sparse,Kawano:2018hf}).  Therefore, it is crucial to develop a clustering technique that aims to minimize the error rate.

In this study, we propose a novel cluster-based LDA method to minimize the misclassification error rate.  We adopt hierarchical clustering to construct clusters of classes.  A crucial aspect is how to determine the clusters.   Ordinary hierarchical clustering merges clusters on the basis of a dissimilarity measure between two clusters. However, the dissimilarity measure does not aim to decrease the error rate. This study uses the leave-one-out cross-validation (CV) value because it is an unbiased estimator of the misclassification error rate \citep{lachenbruch1967almost}.  The CV value is computed for the following two-stage classification rule. In the first stage, we adopt LDA to create a classification rule that allocates the input to some cluster of classes.  In the second stage, we again apply LDA to each cluster.  The clusters are constructed such that the CV error is minimized.

The proposed approach involves a heavy computational load because the CV value must be computed at each step of the hierarchical clustering algorithm.  Therefore, it is necessary to construct an efficient algorithm to compute the CV value.  One can develop a regression formulation and use an efficient algorithm for ridge regression (see, e.g., \citealp{konishi2008information}). \cite{Cawley:2003ha} introduced an efficient algorithm using a regression formulation given by \citet{Xu:2001vi}; however, their method can be applied to only two-class LDA.  \cite{ye2007least} proposed a multivariate regression formulation for multiclass LDA; however, his method can be used only when the number of dimensions of the projected spaces, say $D$, is fixed.  In other words, we cannot select the value of $D$.  In practice, an analyst often needs to determine $D$ for interpretation purposes (e.g., $D=2$ for visualization).  Moreover, the prediction accuracy is strongly dependent on $D$.  Thus, \citeauthor{ye2007least}'s \citeyearpar{ye2007least} approach is limited.  In this study, we develop a novel regression formulation for multiclass classification in which a user can determine $D$.  Accordingly, we derive an efficient algorithm that computes an approximate CV value.  A theoretical justification for the approximation is also presented.

Monte Carlo simulation is conducted to investigate the performance of the proposed procedure.  The usefulness of the proposed procedure is illustrated through the analysis of mouse consomic strain data.  We provide an {\tt R} package {\tt hclda} for implementing our algorithm; it is available at \url{https://github.com/keihirose/hclda}.

The remainder of this article is organized as follows. Section \ref{sec:LDA} briefly reviews multiclass LDA. Section \ref{sec:proposed} formulates the proposed algorithm on the basis of a two-stage procedure with LDA.  Section \ref{sec:algorithm} derives an efficient algorithm for computing the CV value in multiclass LDA.  Section \ref{sec:numerical} discusses the effectiveness of our procedure on the basis of Monte Carlo simulation.  Section \ref{sec:realdata} presents real data analysis . Finally, Section \ref{sec:conclusion} concludes the paper. Some technical proofs are deferred to the appendices.

\section{Linear discriminant analysis}
\label{sec:LDA}
Suppose that we have $n$ observations $\{(y_{i}, \bm{x}_{i})$ $(i=1,\ldots, n)\}$ with respect to class labels $y_{i}\in \{1, \ldots, J\}$ and predictor vectors $\bm{x}_{i}=(x_{i1}, \ldots, x_{ip})^{\mathsf{T}}$, where $J$ is the number of classes and $\bm{A}^{\mathsf{T}}$ denotes the transpose of $\bm{A}$.  Let $\mathcal{G} = \{G_1,...,G_J\}$, where $G_j$ is the subset of $\{\bm{x}_1,\dots,\bm{x}_n\}$ that belongs to the $j$th class; when $y_i=j$, we have $\bm{x}_i \in G_j$.        

Let $\bar{\bm{x}}_{j}$ be the sample mean vector in class $j$ and $\bar{\bm{x}}$ be the sample mean vector:
\begin{align}
&\bar{\bm{x}}_{j}=\frac{1}{n_{j}} \sum_{i: y_{i}=j}^{} \bm{x}_{i} ,\nonumber  \\ 
&\bar{\bm{x}}=\frac{1}{n} \sum_{j=1}^{J} n_{j}\bar{\bm{x}}_{j} , \nonumber
\end{align}
where $n_{j}$ is the number of observations in class $j$. 
Let $\bm{S}_{B}$ be the between-class covariance matrix and $\bm{S}_{W}$ be the within-class covariance matrix:
\begin{align}
&\bm{S}_{B}=\frac{1}{n}\sum_{j=1}^{J} n_{j} (\bar{\bm{x}}_{j}-\bar{\bm{x}})(\bar{\bm{x}}_{j}-\bar{\bm{x}})^{\mathsf{T}},\nonumber\\ 
&\bm{S}_{W}=\frac{1}{n}\sum_{j=1}^{J}\sum_{i:y_{i}=j}^{} (\bm{x}_{i}-\bar{\bm{x}}_{j})(\bm{x}_{i}-\bar{\bm{x}}_{j})^{\mathsf{T}}. \nonumber
\end{align}

Through LDA, a $p$-dimensional predictor $\bm{x}$ is transformed into a $D$-dimensional space ($D<\min(J-1,p)$) by linear transformation matrix $\bm{A}$, which is obtained by maximizing the ratio of the between-class covariance matrix and the within-class covariance matrix: 
\begin{align}
\hat{\bm{T}}=\argmax_{\bm{T}  \in \mathbb{R}^{p\times D}} \mathrm{tr}[(\bm{T}^{\mathsf{T}}\bm{S}_{W}\bm{T})^{-1}(\bm{T}^{\mathsf{T}}\bm{S}_{B}\bm{T} )] \ \ \mbox{subject to}\ \bm{T}^{\mathsf{T}}\bm{S}_{W}\bm{T}=\bm{I}_{D}. \label{eq:5}
\end{align}
The optimization problem given by Eq. $(\ref{eq:5})$ results in a generalized eigenvalue problem; the $d$-th column of $\hat{\bm{T}}$ $(d=1,\ldots, D)$ is given by the eigenvector corresponding to the $d$-th largest eigenvalue of matrix $\bm{S}_{W}^{-1}\bm{S}_{B}$.

 The minimum eigenvalue of $\bm{S}_W$ is often small when $p$ is large.  Indeed, the inverse matrix of $\bm{S}_W$ does not exist when $p > n$.  In such cases, one can use the ridge penalization $\bm{S}_{W,\delta} :=\bm{S}_W + \frac{\delta}{n} \bm{I}$, where $\delta \geq 0$ is a regularization parameter.  The optimization problem is then expressed as
\begin{align}
\hat{\bm{T}}_{\delta}=\argmax_{\bm{T}  \in \mathbb{R}^{p\times D}} \mathrm{tr}[(\bm{T}^{\mathsf{T}}\bm{S}_{W,\delta}\bm{T})^{-1}(\bm{T}^{\mathsf{T}}\bm{S}_{B}\bm{T} )] \ \ \mbox{subject to}\ \bm{T}^{\mathsf{T}}\bm{S}_{W,\delta}\bm{T}=\bm{I}_{D}.\nonumber
\end{align}

Now, we consider a multiclass classification problem.   Given $\bm{x}\in \mathbb{R}^{p}$, the $p$-dimensional observation $\bm{x}$ is transformed as $\bm{z}:=\hat{\bm{T}}_{\delta}^{\mathsf{T}} \bm{x}$.  The class of $\bm{x}$, say $G(\bm{x})$, is assigned using a Euclidean distance in the transformed space \citep{Witten:2011kc}:
\begin{align}
G(\bm{x})=\argmin_{j\in \{1,\ldots, J\}} \left(\bm{z}-\bar{\bm{z}}_{j}\right)^{\mathsf{T}} \left(\bm{z}-\bar{\bm{z}}_{j}\right), \nonumber
\end{align}
where $ \bar{\bm{z}}_{j}=\hat{\bm{T}}_{\delta}\bar{\bm{x}}_j$.

\section{Proposed method}
\label{sec:proposed}

As described in the Introduction, observations in some classes are easy to be classified whereas those in others are difficult to be classified.  To address this issue, we define a set of classes whose observations may not be easy to be classified with ordinary LDA. We refer to a set of classes as {\it metaclasses}.  Suppose that we have a set of $m$ metaclasses, $\mathcal{M}_m = \{\mathcal{C}_1,\dots,\mathcal{C}_m\}$, where each $\mathcal{C}_i$ ($i=1,\dots,m$) is a metaclass.  The metaclasses satisfy $\mathcal{C}_i \subset \mathcal{G}$ $(i=1,\dots,m)$, $\mathcal{C}_i \cap \mathcal{C}_j = \phi$ ($i \neq j$) and $ \cup_{i=1}^m \mathcal{C}_i = \{G_1,...,G_J\}$.  

On the basis of the metaclasses, we conduct two-stage LDA for given input $\bm{x}$.    First, we perform LDA to allocate input $\bm{x}$ to $\mathcal{C}_1,\cdots,\mathcal{C}_m$.  Suppose that $\bm{x}$ is allocated to $\mathcal{C}_i$.  If $\mathcal{C}_i$ consists of one class, i.e., $\mathcal{C}_i = \{G_j\}$ for some $j$, $\bm{x}$ is allocated to $G_j$.  If $\mathcal{C}_i$ consists of more than one class, expressed as $\mathcal{C}_i = \{G_{i_1},\dots,G_{i_j}\}$, we perform LDA on the metaclass $\mathcal{C}_i$ to allocate input $\bm{x}$ to $G_{i_1},\dots,G_{i_j}$.  The two-stage algorithm is summarized in Algorithm \ref{algorithm:Two step LDA}.   

The two-stage algorithm is implemented once a set of metaclasses $\mathcal{M} = \{\mathcal{C}_1,\dots,\mathcal{C}_m\}$ is determined.  To determine the metaclasses, we conduct hierarchical cluster analysis as follows.  The initial value of the metaclasses is $\mathcal{M}_J = \{\mathcal{C}_1,\dots,\mathcal{C}_J\}$ and $\mathcal{C}_j = \{G_j\}$ ($j=1,\dots,J$).   At each step, we combine two metaclasses $\mathcal{C}_j$ and $\mathcal{C}_k$ such that the CV value is minimized.   The details of the CV will be presented in Section \ref{sec:algorithm}; the definition is given by Eq. \eqref{eq:CVerror}.  The algorithm is summarized in Algorithm \ref{algorithm:Hierarchical LDA}.

\begin{algorithm}
	\caption{Two-stage LDA of a set of metaclasses $\mathcal{M}$}
	\label{algorithm:Two step LDA}
	\begin{algorithmic}[1]
		\STATE Input: $\bm{x}$
		\STATE Suppose that $\mathcal{M} = \{\mathcal{C}_1,...,\mathcal{C}_m\}$.  		
		\STATE Perform LDA on $\mathcal{M}$ to predict the class of $\bm{x}$. Suppose that $\bm{x}$ is allocated to $\mathcal{C}_j$.
		 \IF{$\mathcal{C}_j$ includes more than one class }
		 \STATE{Perform LDA on $\mathcal{C}_j$ and allocate $\bm{x}$ to a class of $\mathcal{C}_j$}
		 \ELSE
		 \STATE{Allocate $\bm{x}$ to $\mathcal{C}_j$}
		 \ENDIF 
	\end{algorithmic}
\end{algorithm}

\begin{algorithm}
	\caption{Hierarchical LDA.  The details of two-stage LDA are provided in Algorithm \ref{algorithm:Two step LDA}.}
	\label{algorithm:Hierarchical LDA}
	\begin{algorithmic}[1]
		\STATE Let $\mathcal{C}_j^0 = G_j$ $(j=1,\cdots,J)$. Define a set of metaclasses as $\mathcal{M}^{(0)} = \{\mathcal{C}_1^{(0)},...,\mathcal{C}_J^{(0)}\}$. 
		\FOR{$t=0$ to $J-2$}
 		\FOR{all pairs of $\mathcal{C}_j $ and $ \mathcal{C}_k$ in $\mathcal{M}^{(t)} $}
		\STATE Create a new cluster $\mathcal{C}_{(j,k)} \leftarrow \mathcal{C}_j^{(t)} \cup \mathcal{C}_k^{(t)}$ 
		\STATE $\mathcal{A}_{(j,k)} \leftarrow \{\mathcal{M}^{(t)} ,\mathcal{C}_{(j,k)}\} \backslash \{\mathcal{C}_j^{(t)}, \mathcal{C}_k^{(t)}\}$ 
		\STATE Perform CV in two-stage LDA based on a set of metaclasses $\mathcal{A}_{(j,k)}$, and obtain the CV value ${\rm CV}(j,k)$
 		\ENDFOR
 		\STATE Find an index $(j_0,k_0)$ whose CV value ${\rm CV}(j_0,k_0)$ is minimized.  
 		\STATE $\mathcal{C}_{(j_0,k_0)} \leftarrow \mathcal{C}_{j_0}^{(t)} \cup \mathcal{C}_{k_0}^{(t)}$, 
 		\STATE $\mathcal{M}^{(t+1)} \leftarrow \{\mathcal{M}^{(t)} ,\mathcal{C}\} \backslash \{\mathcal{C}_{j_0}^{(t)}, \mathcal{C}_{k_0}^{(t)}\}$ 
 		\STATE Denote the metaclass as $\mathcal{M}^{(t+1)} = \{\mathcal{C}_1^{(t+1)},\dots,\mathcal{C}_{K-t}^{(t+1)}\}$
 		\ENDFOR
	\end{algorithmic}
\end{algorithm}

\section{Efficient algorithm for CV of LDA}
\label{sec:algorithm}
CV is adopted to construct the metaclasses. Typically, $5$- or $10$-fold CV is used in classification problems. However, in practice, it would be unstable when $n_j$ is small, as shown in the case of the mouse consomic data.  Thus, we adopt leave-one-out CV.  

To implement CV, we compute the $D$ largest eigenvalues and eigenvectors of $(\bm{S}_{W,\delta}^{(-i)})^{-1/2}\bm{S}_B^{(-i)} (\bm{S}_{W,\delta}^{(-i)})^{-1/2}$, say $(\lambda_d^{(i)},\bm{s}_{d,\delta}^{(-i)})$ $(d=1,\dots,D;$ $i=1,\dots,n)$, where $\bm{S}_B^{(-i)} $ and $\bm{S}_{W,\delta}^{(-i)} $ are the between-class covariance and within-class covariance matrices constructed by $\{\bm{x}_1,\dots,\bm{x}_{i-1},\bm{x}_{i+1},\dots,\bm{x}_n\}$, respectively. Let $\bm{t}_{d,\delta}^{(-i)} = (\bm{S}_{W,\delta}^{(-i)})^{-1/2}\bm{s}_{d,\delta}^{(-i)}$ ($d=1,\dots,D$), and let  $\bm{T}_{\delta}^{(-i)} = (\bm{t}_{1,\delta}^{(-i)},\dots,\bm{t}_{D,\delta}^{(-i)})$.  The $p$-dimensional input of the test data, $\bm{x}_i$, is then transformed as $\bm{z}_i^{(-i)}=(\hat{\bm{T}}_{\delta}^{(-i)})^{\mathsf{T}} \bm{x}_i$. We estimate the class of $\bm{x}_i$, say $G^{(-i)}(\bm{x}_i)$, as follows:
\begin{equation}
	G^{(-i)}(\bm{x}_i)=\argmin_{j\in \{1,\ldots, J\}} \left(\bm{z}^{(-i)}-\bar{\bm{z}}_{j}^{(-i)}\right)^{\mathsf{T}} \left(\bm{z}^{(-i)}-\bar{\bm{z}}_{j}^{(-i)}\right), \label{eq:groupCV}
\end{equation}
where $ \bar{\bm{z}}_{j}^{(-i)}=\hat{\bm{T}}_{\delta}^{(-i)}\bar{\bm{x}}_j^{(-i)}$ with $\bar{\bm{x}}_j^{(-i)} = \frac{1}{n_j^{(-i)}}\sum_{k: y_k=j, k \neq i}^{}\bm{x}_k$.   
Here,
$$
n_j^{(-i)} = \left\{
\begin{array}{cc}
n_j - 1 & \quad (y_i = j)\\
n_j  & \quad (y_i \neq j)
\end{array}
.
\right.
$$
The CV value is calculated as 
\begin{equation}
\mbox{CV} =\frac{1}{n}\sum_{i=1}^{n} I\left[y_{i}\neq G^{(-i)}(\bm{x}_i)\right], \label{eq:CVerror}
\end{equation}
where $I(\cdot)$ is an indicator function.  

Our hierarchical LDA algorithm needs at least $O(J^2)$ operations of CV; hence, it becomes slow when $J$ is large.  Leave-one-out CV needs the eigenvalue and eigenvectors of $(\bm{S}_{W,\delta}^{(-i)})^{-1/2}\bm{S}_B^{(-i)} (\bm{S}_{W,\delta}^{(-i)})^{-1/2}$ for $i=1,\dots,n$, which involves a heavy computational load for large $p$.  Therefore, an efficient algorithm for computing CV is required.  In this section, we present an efficient algorithm by extending CV to two-class LDA described in \citet{Cawley:2003ha}.

\subsection{Least-squares formulation}
We consider the problem of ridge regression as follows:
\begin{eqnarray}
	\argmin_{\beta_0,\bm{\beta}}\left\{ \| \bm{y}  - \beta_0\bm{1}_n - \bm{X} \bm{\beta}\|^2 + \delta\| \bm{\beta}\|^2 \right\}\label{eq:ridge}
\end{eqnarray}
where $\bm{y} = (\bm{y}_{1},\dots,\bm{y}_{g})$ is a response vector, $\bm{1}_n$ is an $n$-dimensional vector whose elements are 1, $\beta_0$ is an intercept, $\bm{X} = (\bm{x}_1,\dots,\bm{x}_n)^{\mathsf{T}}$ is a design matrix, $\bm{\beta}$ is a regression coefficient vector, and $\delta \geq 0$ is a regularization parameter.  

The response vector $\bm{y}$ corresponds to the output of the classification problem.  For example, in the two-class classification problem (i.e., $J=2$), we may use $\bm{y}_1 = \frac{1}{n_1}\bm{1}_{n_1}$ and $\bm{y}_2 = -\frac{1}{n_2}\bm{1}_{n_2}$ \citep{Cawley:2003ha}.   With these response vectors, the estimate of the regression coefficient vector $\hat{\bm{\beta}}$ is parallel to the eigenvector of $\bm{S}_{W,\delta}^{-1}\bm{S}_B $. 

However, to the best of our knowledge, the construction of $\bm{y}$ for multiclass LDA has not been investigated thus far.  To show how the response vector $\bm{y}$ is constructed, first, we define $\lambda_d$ and $\bm{s}_d$ as the $d$th largest eigenvalue and the corresponding eigenvector of $\bm{S}_{W,\delta}^{-1/2}\bm{S}_B\bm{S}_{W,\delta}^{-1/2}$, respectively.  The response vectors $\bm{y}_{j}$ ($j=1,\dots,J$) are determined such that the ridge estimate of $\bm{\beta}$ is parallel to $\bm{S}_{W,\delta}^{-1/2}\bm{s}_d$.  

Because the responses and regression coefficients depend on $d$, we denote the responses and coefficients as $\bm{y}_d = (\bm{y}_{1d}^{\mathsf{T}},\dots,\bm{y}_{Jd}^{\mathsf{T}})$, $\beta_{0d}$, and $\bm{\beta}_d$.  The normal equation based on Eq. \eqref{eq:ridge} is then expressed as
\begin{eqnarray}
	\begin{pmatrix}
		n  &  \displaystyle \sum_{j=1}^J\bm{1}_{n_j}^{\mathsf{T}}\bm{X}_j \\
		\displaystyle \sum_{j=1}^J\bm{X}_j^{\mathsf{T}}\bm{1}_{n_j} \ & \ \displaystyle \sum_{j=1}^J\bm{X}_j^{\mathsf{T}}\bm{X}_j +  \delta \bm{I}_p   \\
	\end{pmatrix}
	\begin{pmatrix}
	\beta_{0d}\\
	\bm{\beta}_d 
	\end{pmatrix}
	=
	\begin{pmatrix}
	\bm{1}_{n_1}^{\mathsf{T}} &  \cdots & \bm{1}_{n_J}^{\mathsf{T}}
		\vspace{1mm}\\
	\bm{X}_1^{\mathsf{T}} &  \cdots & \bm{X}_J^{\mathsf{T}}\\
	\end{pmatrix}
	\begin{pmatrix}
\bm{y}_{1d}\\
\vdots\\
\bm{y}_{Jd}\\
	\end{pmatrix}.
	\label{eq:formula}
\end{eqnarray}
We define the response vectors $\bm{y}_{jd}$ $(j=1,\dots,J;\ d=1,\dots,D)$ as 
\begin{eqnarray}
	\bm{y}_{jd} = \xi_{jd}\bm{1}_{n_j},\quad \xi_{jd} = \frac{1}{\lambda_d} (\bar{\bm{x}}_{j}-\bar{\bm{x}})^{\mathsf{T}}\bm{S}_{W,\delta}^{-1/2} \bm{s}_d. \label{eq:responses}
\end{eqnarray}
Then, we get the following proposition: 
\begin{prop}
The right-hand side of Eq. \eqref{eq:formula} is calculated as
\begin{eqnarray*}
	 \sum_{j=1}^J \bm{1}_{n_j}^{\mathsf{T}}\bm{y}_{jd} &=& 0,\\
	 \sum_{j=1}^J \bm{X}_j^{\mathsf{T}}\bm{y}_{jd} &=& n\bm{S}_{W,\delta}^{1/2}\bm{s}_d.
\end{eqnarray*}
\label{prop:eigenreg}
\end{prop}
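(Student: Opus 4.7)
The plan is to verify each identity by direct substitution, exploiting the definitions of $\bar{\bm{x}}$ and $\bm{S}_B$, and then closing the second identity with the eigenvalue equation satisfied by $\bm{s}_d$.

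For the first identity, I would plug in $\bm{y}_{jd} = \xi_{jd}\bm{1}_{n_j}$ to get $\sum_j \bm{1}_{n_j}^{\mathsf{T}}\bm{y}_{jd} = \sum_j n_j \xi_{jd}$. Pulling the common factor $\lambda_d^{-1}\bm{S}_{W,\delta}^{-1/2}\bm{s}_d$ out on the right, the remaining sum is $\sum_j n_j(\bar{\bm{x}}_j - \bar{\bm{x}})^{\mathsf{T}}$, which vanishes because $\sum_j n_j \bar{\bm{x}}_j = n\bar{\bm{x}}$ by the definition of the overall mean.

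For the second identity, I would first use $\bm{X}_j^{\mathsf{T}}\bm{1}_{n_j} = n_j \bar{\bm{x}}_j$ to reduce the left-hand side to $\sum_j n_j \xi_{jd}\bar{\bm{x}}_j$. Because $\sum_j n_j \xi_{jd} = 0$ from the first identity, I can safely subtract $\bar{\bm{x}}$ from each $\bar{\bm{x}}_j$ without changing the sum, yielding $\sum_j n_j(\bar{\bm{x}}_j - \bar{\bm{x}})\xi_{jd}$. Substituting the definition of $\xi_{jd}$ and factoring out $\lambda_d^{-1}\bm{S}_{W,\delta}^{-1/2}\bm{s}_d$ on the right leaves the rank-one sum
\begin{equation*}
\sum_{j=1}^{J} n_j (\bar{\bm{x}}_j - \bar{\bm{x}})(\bar{\bm{x}}_j - \bar{\bm{x}})^{\mathsf{T}} = n \bm{S}_B,
\end{equation*}
so the expression becomes $(n/\lambda_d)\bm{S}_B \bm{S}_{W,\delta}^{-1/2}\bm{s}_d$.

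The closing step is to apply the eigenvalue equation: by definition $\bm{S}_{W,\delta}^{-1/2}\bm{S}_B \bm{S}_{W,\delta}^{-1/2}\bm{s}_d = \lambda_d \bm{s}_d$, hence $\bm{S}_B \bm{S}_{W,\delta}^{-1/2}\bm{s}_d = \lambda_d \bm{S}_{W,\delta}^{1/2}\bm{s}_d$, and the factor $\lambda_d$ cancels against the $1/\lambda_d$, delivering $n\bm{S}_{W,\delta}^{1/2}\bm{s}_d$. There is no serious obstacle here; the only subtle point is recognizing early that the first identity allows the substitution $\bar{\bm{x}}_j \mapsto \bar{\bm{x}}_j - \bar{\bm{x}}$ inside the second sum, which is what turns the raw mean vectors into the between-class scatter matrix and lets the eigenvalue relation take over.
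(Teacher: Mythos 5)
Your proof is correct, and it takes a more direct route than the paper's. You verify both identities by forward substitution: the first follows from $\sum_j n_j(\bar{\bm{x}}_j-\bar{\bm{x}})=\bm{0}$, and the second from the centering trick $\sum_j n_j\xi_{jd}\bar{\bm{x}}_j=\sum_j n_j\xi_{jd}(\bar{\bm{x}}_j-\bar{\bm{x}})$ (licensed by the first identity), which assembles the between-class scatter $n\bm{S}_B$ and is then closed by the eigenvalue relation $\bm{S}_B\bm{S}_{W,\delta}^{-1/2}\bm{s}_d=\lambda_d\bm{S}_{W,\delta}^{1/2}\bm{s}_d$. The paper instead works backward from the eigenvector structure: it decomposes $\bm{S}_B$ into the pairwise form $\frac{1}{2n^2}\sum_j\sum_{k\neq j}n_jn_k(\bar{\bm{x}}_j-\bar{\bm{x}}_k)(\bar{\bm{x}}_j-\bar{\bm{x}}_k)^{\mathsf{T}}$, observes that $\bm{s}_d$ lies in the span of the vectors $\bm{S}_{W,\delta}^{-1/2}(\bar{\bm{x}}_j-\bar{\bm{x}}_k)$, solves for the coefficients $\gamma_{jk,d}$, and collapses them to obtain the representation $\bm{s}_d=\bm{S}_{W,\delta}^{-1/2}\sum_j\frac{\xi_{jd}}{n}n_j\bar{\bm{x}}_j$, from which both identities are read off. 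The two arguments rest on the same two facts (the scatter identity and the eigenvalue equation), but yours is shorter and skips the pairwise decomposition and the $\gamma_{jk,d}$ bookkeeping entirely; what the paper's longer route buys is a derivation of \emph{why} the response weights $\xi_{jd}$ are defined as they are, rather than a verification that a given definition works. The only implicit hypothesis in both is $\lambda_d\neq 0$, which is already required for $\xi_{jd}$ to be well defined.
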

\begin{proof}
The proof is given in Appendix	\ref{app:prop:eigenreg}.
\end{proof}
With Proposition \ref{prop:eigenreg}, Eq. \eqref{eq:formula} is expressed as
\begin{eqnarray}
	 n\beta_{0d} + \sum_{j=1}^Jn_j\bar{\bm{x}}_j^{\mathsf{T}}\bm{\beta}_d  &=& 0,\label{eq:n2}\\
	\sum_{j=1}^Jn_j\bar{\bm{x}}_j\beta_{0d} + \left(\sum_{j=1}^J\bm{X}_j^{\mathsf{T}}\bm{X}_j + \delta\bm{I}_p\right) \bm{\beta}_d  &=&n\bm{S}_{W,\delta}^{1/2} \bm{s}_d. \label{eq:n1}
\end{eqnarray}
We have the following theorem:  
\begin{Theorem}\label{thm:betat}
Let $\hat{\bm{\beta}}_d$ be the solution of $\bm{\beta}_d$, i.e.,  $\hat{\bm{\beta}}_d$ is a regression coefficient that satisfies Eqs. \eqref{eq:n2} and \eqref{eq:n1}.   Let $\hat{\bm{t}}_{d,\delta}$ be the $d$th column vector of $\hat{\bm{T}}_{\delta}$.   Then, we obtain
\begin{equation*}
	\hat{\bm{\beta}}_d =\frac{1}{1+\lambda_d} \bm{S}_{W,\delta}^{-1/2} \bm{s}_d = \frac{1}{1+\lambda_d} \hat{\bm{t}}_{d,\delta}.
\end{equation*}
The estimate of regression coefficient vector $\hat{\bm{\beta}}_d$ is parallel to the transformation vector of multiclass LDA, $\hat{\bm{t}}_{d,\delta}$.
\end{Theorem}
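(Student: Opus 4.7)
The plan is to eliminate the intercept $\beta_{0d}$ from \eqref{eq:n2}, reduce the remaining vector equation to an eigenvalue-type relation, and then read off $\hat{\bm{\beta}}_d$ in closed form.

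First, I would observe that $\sum_{j=1}^{J} n_j \bar{\bm{x}}_j = n\bar{\bm{x}}$, so \eqref{eq:n2} gives at once $\beta_{0d} = -\bar{\bm{x}}^{\mathsf{T}}\bm{\beta}_d$. Substituting this into \eqref{eq:n1} eliminates the intercept and produces
\begin{equation*}
\left(\sum_{j=1}^{J}\bm{X}_j^{\mathsf{T}}\bm{X}_j - n\bar{\bm{x}}\bar{\bm{x}}^{\mathsf{T}} + \delta\bm{I}_p\right)\bm{\beta}_d = n\bm{S}_{W,\delta}^{1/2}\bm{s}_d .
\end{equation*}

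Next, I would invoke the standard scatter decomposition: since
\begin{equation*}
\sum_{i=1}^{n}(\bm{x}_i-\bar{\bm{x}})(\bm{x}_i-\bar{\bm{x}})^{\mathsf{T}} = \sum_{i=1}^{n}\bm{x}_i\bm{x}_i^{\mathsf{T}} - n\bar{\bm{x}}\bar{\bm{x}}^{\mathsf{T}} = n(\bm{S}_W + \bm{S}_B),
\end{equation*}
the bracketed matrix on the left becomes $n\bm{S}_W + n\bm{S}_B + \delta\bm{I}_p = n(\bm{S}_{W,\delta} + \bm{S}_B)$, using the definition $\bm{S}_{W,\delta} = \bm{S}_W + \tfrac{\delta}{n}\bm{I}_p$. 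The equation collapses to $(\bm{S}_{W,\delta}+\bm{S}_B)\bm{\beta}_d = \bm{S}_{W,\delta}^{1/2}\bm{s}_d$.

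The final step is the change of variables $\bm{u}_d := \bm{S}_{W,\delta}^{1/2}\bm{\beta}_d$. Left-multiplying by $\bm{S}_{W,\delta}^{-1/2}$ yields
\begin{equation*}
\left(\bm{I}_p + \bm{S}_{W,\delta}^{-1/2}\bm{S}_B\bm{S}_{W,\delta}^{-1/2}\right)\bm{u}_d = \bm{s}_d .
\end{equation*}
Because $\bm{s}_d$ is by definition an eigenvector of $\bm{S}_{W,\delta}^{-1/2}\bm{S}_B\bm{S}_{W,\delta}^{-1/2}$ with eigenvalue $\lambda_d$, the unique solution is $\bm{u}_d = (1+\lambda_d)^{-1}\bm{s}_d$, whence $\hat{\bm{\beta}}_d = (1+\lambda_d)^{-1}\bm{S}_{W,\delta}^{-1/2}\bm{s}_d = (1+\lambda_d)^{-1}\hat{\bm{t}}_{d,\delta}$.

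The only nontrivial moments are recognizing the scatter-decomposition identity that merges $\sum_j \bm{X}_j^{\mathsf{T}}\bm{X}_j - n\bar{\bm{x}}\bar{\bm{x}}^{\mathsf{T}}$ with the ridge term to form $n\bm{S}_{W,\delta} + n\bm{S}_B$, and choosing the change of variables that makes $\bm{s}_d$'s eigenvector property directly applicable; everything else is routine linear algebra.
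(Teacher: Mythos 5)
Your proof is correct and follows essentially the same route as the paper: eliminate the intercept via \eqref{eq:n2}, reduce \eqref{eq:n1} to $(\bm{S}_{W,\delta}+\bm{S}_B)\hat{\bm{\beta}}_d = \bm{S}_{W,\delta}^{1/2}\bm{s}_d$, and conclude from the eigenvector property of $\bm{s}_d$ together with the invertibility of $\bm{I}_p + \bm{S}_{W,\delta}^{-1/2}\bm{S}_B\bm{S}_{W,\delta}^{-1/2}$. The only cosmetic difference is that you invoke the total-scatter identity directly, whereas the paper adds the separate expansions of $n\bm{S}_W$ and $n\bm{S}_B$; the content is identical.
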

\begin{proof}
	The proof is given in Appendix \ref{app:thm:betat}.
\end{proof}
The normal equation, Eq. \eqref{eq:formula}, is expressed as 
\begin{equation*}
	(\tilde{\bm{X}}^{\mathsf{T}}\tilde{\bm{X}} + \bm{\Delta})\bm{\alpha}_d = \tilde{\bm{X}}^{\mathsf{T}}\bm{y}_d,
\end{equation*}
where $\tilde{\bm{X}} = (\bm{1}_n,\bm{X})$, $\bm{\Delta} = {\rm diag}(0,\delta \bm{1}_D)$, $\bm{\alpha}_d = (\beta_{0d},\bm{\beta}_d^{\mathsf{T}})^{\mathsf{T}}$ ($d=1,\dots,D$), and $\bm{y}_{d} = (\bm{y}_{1d}^{\mathsf{T}},\dots,\bm{y}_{Jd}^{\mathsf{T}})^{\mathsf{T}}$.   The estimator of $\bm{\alpha}_d$ is then expressed as 
\begin{equation*}
	\hat{\bm{\alpha}}_d = (\tilde{\bm{X}}^{\mathsf{T}}\tilde{\bm{X}} + \bm{\Delta})^{-1}\tilde{\bm{X}}^{\mathsf{T}}\bm{y}_d. \label{eq:OLS}
\end{equation*}
Given input $\bm{x} \in \mathbb{R}^p$, denoting $\tilde{\bm{x}} = (1,\bm{x}^{\mathsf{T}})^{\mathsf{T}}$ and using Eq. \eqref{eq:n2}, we have the following relationship:
\begin{eqnarray}
	\tilde{\bm{x}}^{\mathsf{T}}\hat{\bm{\alpha}}_d &=&  \hat{\beta}_{0d}  + \bm{x}^{\mathsf{T}}\hat{\bm{\beta}}_{d} 
	= \frac{1}{1+\lambda_d}(\bm{x}-\bar{\bm{x}})^{\mathsf{T}}\hat{\bm{t}}_{d,\delta}. \label{eq:trans_regress}
\end{eqnarray}
Therefore, the transformed vector $\hat{\bm{T}}_{\delta}\bm{x}$ is expressed as $(\bm{I}_D+\bm{\Lambda}_D)\hat{\bm{A}}\tilde{\bm{x}}$, where $\hat{\bm{A}} = (\hat{\bm{\alpha}}_1,\dots,\hat{\bm{\alpha}}_D)^{\mathsf{T}}$ and $\bm{\Lambda}_D = {\rm diag}(\lambda_1,\dots,\lambda_D)$.  
\subsection{Construction of an efficient algorithm}\label{sec:construction of algorithm}
We consider the problem of CV with the regression formulation in Eq. \eqref{eq:formula}.  We remove the $i$th observation from  $\tilde{\bm{X}}$ and $\bm{y}_d$, and construct an $(n-1) \times (p+1)$ matrix $\tilde{\bm{X}}^{(-i)} = (\tilde{\bm{x}}_1,\dots,\tilde{\bm{x}}_{i-1},\tilde{\bm{x}}_{i+1},\dots,\tilde{\bm{x}}_n)^{\mathsf{T}}$ and
\begin{equation*}
	\bm{y}_{jd}^{(-i)} := \xi_{jd}^{(-i)}\bm{1}_{n_j^{(-i)}},  \ \xi_{jd}^{(-i)} := \frac{1}{(n-1)\lambda_d^{(-i)}} \sum_{k=1}^J n_k^{(-i)}  (\bar{\bm{x}}_{j}^{(-i)}-\bar{\bm{x}}_k^{(-i)})^{\mathsf{T}} (\bm{S}_{W,\delta}^{(-i)})^{-1/2} \bm{s}_d^{(-i)}.
\end{equation*}
The ridge estimate based on $\tilde{\bm{X}}^{(-i)}$ and $\bm{y}_d^{(-i)}$ is expressed as 
\begin{equation*}
	\hat{\bm{\alpha}}_d^{(-i)} = \left\{(\tilde{\bm{X}}^{(-i)})^{\mathsf{T}}\tilde{\bm{X}}^{(-i)} + \bm{\Delta}\right\}^{-1}(\tilde{\bm{X}}^{(-i)})^{\mathsf{T}}\bm{y}_d^{(-i)}. \label{eq:alpha_-i_true}
\end{equation*}
Based on the regression coefficient vector $\hat{\bm{\alpha}}_d^{(-i)}$ and from Eq. \eqref{eq:trans_regress}, the transformed value for the $d$th dimension is
$$
	z_d - \bar{z}_d = (1+\lambda_d^{(-i)}) \left( \tilde{\bm{x}}_i- \tilde{\bar{\bm{x}}}_j^{(-i)}\right)^{\mathsf{T}}\hat{\bm{\alpha}}_d^{(-i)} ,
$$
From Eq. \eqref{eq:groupCV}, the class for the $i$th observation $(i=1,\dots,n)$ is  
\begin{equation}
	G^{(-i)}(\bm{x}_i) =  \argmin_{j \in \{1,\dots,J\}} \left\{ \sum_{d=1}^D (1+\lambda_d^{(-i)})^2 \left\{ (\tilde{\bm{x}}_i - \tilde{\bar{\bm{x}}}_j^{(-i)})^{\mathsf{T}}\hat{\bm{\alpha}}_d^{(-i)} \right\}^2 \right\}. \label{eq:allocCV_true}
\end{equation}
The problem with using Eq. \eqref{eq:allocCV_true} is that $\hat{\bm{\alpha}}_d^{(-i)}$ and $\lambda_d^{(-i)}$ require eigenvalues and eigenvectors of $(\bm{S}_{W,\delta}^{(-i)})^{-1/2}\bm{S}_B^{(-i)} (\bm{S}_{W,\delta}^{(-i)})^{-1/2}$, which involves $O(p^3)$ operations.   Therefore, direct calculation of both $(\tilde{\bm{x}}_i - \tilde{\bar{\bm{x}}}_j)^{\mathsf{T}}\hat{\bm{\alpha}}_d^{(-i)} $ and $\lambda_d^{(-i)}$ involves a heavy computational load when $p$ is large. To address these issues, we use $(n-1)$-dimensional vector $\bm{y}_d^{*(-i)} = (y_{1d},\dots,y_{(i-1)d},y_{(i+1)d},\dots,y_{nd})^{\mathsf{T}}$ instead of $\bm{y}_d^{(-i)}$.  Section \ref{sec:theoreticaljustification} presents the theoretical justification for using $\bm{y}_d^{*(-i)}$.  The ridge estimate based on $\tilde{\bm{X}}^{(-i)}$ and $\bm{y}_d^{*(-i)}$ is now defined as 
\begin{equation}
	\hat{\bm{\alpha}}_d^{*(-i)} = \left\{(\tilde{\bm{X}}^{(-i)})^{\mathsf{T}}\tilde{\bm{X}}^{(-i)} + \bm{\Delta}\right\}^{-1}(\tilde{\bm{X}}^{(-i)})^{\mathsf{T}}\bm{y}_d^{*(-i)}. \label{eq:alpha_-i}
\end{equation}
Furthermore, the approximation of $\lambda_d^{(-i)}$, say $\lambda_d^{*(-i)}$, is defined as 
\begin{eqnarray}
	 \lambda_d^{*(-i)} = \left\{\frac{1}{n-1}\sum_{k\neq i}^n (\tilde{\bm{x}}_k^{\mathsf{T}} \hat{\bm{\alpha}}_d^{*(-i)})^2  + \delta\hat{\bm{\beta}}_d^{*(-i)T}\hat{\bm{\beta}}_d^{*(-i)} \right\}^{-1} - 1,  \quad (i=1,\dots,n). \label{eq:lambdaast_def}
\end{eqnarray}	
The definition of $\lambda_d^{*(-i)}$ in Eq. \eqref{eq:lambdaast_def} is derived from the following relationship between $\bm{X}$ and $\lambda_d$:  
\begin{lemma}\label{lemma:x^a_eigenvalue}
The following equation holds:
\begin{eqnarray}
	\frac{1}{n}\sum_{i=1}^n (\tilde{\bm{x}}_i^{\mathsf{T}} \hat{\bm{\alpha}}_d)^2 + \delta \hat{\bm{\beta}}_d^{\mathsf{T}}\hat{\bm{\beta}}_d = \frac{1}{1+\lambda_d}.  \label{eq:lemma:x^a_eigenvalue0}
\end{eqnarray}	
\end{lemma}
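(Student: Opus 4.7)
The plan is to reduce the left-hand side to a quadratic form in $\hat{\bm{\beta}}_d$ by eliminating the intercept $\hat{\beta}_{0d}$, and then to substitute the explicit form of $\hat{\bm{\beta}}_d$ given by Theorem \ref{thm:betat}. Concretely, I would first expand
\begin{equation*}
\frac{1}{n}\sum_{i=1}^n (\tilde{\bm{x}}_i^{\mathsf{T}}\hat{\bm{\alpha}}_d)^2 = \hat{\beta}_{0d}^2 + 2\hat{\beta}_{0d}\bar{\bm{x}}^{\mathsf{T}}\hat{\bm{\beta}}_d + \hat{\bm{\beta}}_d^{\mathsf{T}}\Bigl(\frac{1}{n}\sum_{i=1}^n \bm{x}_i\bm{x}_i^{\mathsf{T}}\Bigr)\hat{\bm{\beta}}_d.
\end{equation*}
Next I would use Eq.~\eqref{eq:n2}, which immediately gives $\hat{\beta}_{0d} = -\bar{\bm{x}}^{\mathsf{T}}\hat{\bm{\beta}}_d$, so that the first two terms collapse into $-(\bar{\bm{x}}^{\mathsf{T}}\hat{\bm{\beta}}_d)^2$.

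Then I would invoke the standard decomposition of the raw second-moment matrix, $\frac{1}{n}\sum_{i=1}^n \bm{x}_i\bm{x}_i^{\mathsf{T}} = \bm{S}_B + \bm{S}_W + \bar{\bm{x}}\bar{\bm{x}}^{\mathsf{T}}$, which is a direct consequence of the definitions of $\bm{S}_B$ and $\bm{S}_W$ given in Section \ref{sec:LDA}. Substituting this into the previous display, the $(\bar{\bm{x}}^{\mathsf{T}}\hat{\bm{\beta}}_d)^2$ terms cancel, so after adding the penalty term $\delta\hat{\bm{\beta}}_d^{\mathsf{T}}\hat{\bm{\beta}}_d$ the left-hand side becomes
\begin{equation*}
\hat{\bm{\beta}}_d^{\mathsf{T}}\bigl(\bm{S}_B + \bm{S}_W + \delta\bm{I}_p\bigr)\hat{\bm{\beta}}_d \;=\; \hat{\bm{\beta}}_d^{\mathsf{T}}\bigl(\bm{S}_B + \bm{S}_{W,\delta}\bigr)\hat{\bm{\beta}}_d.
\end{equation*}

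To finish, I would plug in $\hat{\bm{\beta}}_d = (1+\lambda_d)^{-1}\bm{S}_{W,\delta}^{-1/2}\bm{s}_d$ from Theorem \ref{thm:betat} and use the two defining properties of $\bm{s}_d$: the unit-norm normalization $\bm{s}_d^{\mathsf{T}}\bm{s}_d = 1$ (which yields $\hat{\bm{\beta}}_d^{\mathsf{T}}\bm{S}_{W,\delta}\hat{\bm{\beta}}_d = (1+\lambda_d)^{-2}$) and the eigenvalue equation $\bm{S}_{W,\delta}^{-1/2}\bm{S}_B\bm{S}_{W,\delta}^{-1/2}\bm{s}_d = \lambda_d\bm{s}_d$ (which yields $\hat{\bm{\beta}}_d^{\mathsf{T}}\bm{S}_B\hat{\bm{\beta}}_d = \lambda_d(1+\lambda_d)^{-2}$). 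Summing these two quantities gives $(1+\lambda_d)/(1+\lambda_d)^2 = 1/(1+\lambda_d)$, as desired.

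There is no genuine obstacle here; the proof is essentially linear-algebraic bookkeeping once Theorem \ref{thm:betat} and Eq.~\eqref{eq:n2} are in hand. The only step that requires a little care is confirming the second-moment decomposition $\frac{1}{n}\sum_i\bm{x}_i\bm{x}_i^{\mathsf{T}} = \bm{S}_B + \bm{S}_W + \bar{\bm{x}}\bar{\bm{x}}^{\mathsf{T}}$, since the $\bm{S}_B$ term uses the class means weighted by $n_j/n$ while $\bm{S}_W$ uses the within-class deviations; cross-terms vanish because $\sum_{i:y_i=j}(\bm{x}_i-\bar{\bm{x}}_j) = \bm{0}$ for each $j$.
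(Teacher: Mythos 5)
Your proposal is correct and follows essentially the same route as the paper's proof: both reduce the left-hand side to the quadratic form $\hat{\bm{\beta}}_d^{\mathsf{T}}(\bm{S}_B+\bm{S}_{W,\delta})\hat{\bm{\beta}}_d$ via the total-covariance decomposition and then evaluate it using Theorem \ref{thm:betat}, the normalization $\bm{s}_d^{\mathsf{T}}\bm{s}_d=1$, and the eigenvalue equation. The only difference is cosmetic: the paper starts from the already-centered identity in Eq.~\eqref{eq:trans_regress}, whereas you eliminate the intercept from Eq.~\eqref{eq:n2} and cancel the $\bar{\bm{x}}\bar{\bm{x}}^{\mathsf{T}}$ term by hand.
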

\begin{proof}
The proof is presented in Appendix \ref{proof:lemma:x^a_eigenvalue}.	
\end{proof}
Based on the approximations given by Eqs. \eqref{eq:alpha_-i} and \eqref{eq:lambdaast_def}, the class allocation is now defined as 
\begin{equation}
	G^{*(-i)}(\bm{x}_i) = \argmin_{j \in \{1,\dots,J\}} \left\{ \sum_{d=1}^D (1+\lambda_d^{*(-i)})^2 \left\{ (\tilde{\bm{x}}_i - \tilde{\bar{\bm{x}}}_j)^{\mathsf{T}}\hat{\bm{\alpha}}_d^{*(-i)} \right\}^2 \right\}. \label{eq:allocCV} 
\end{equation}

\subsubsection{Efficient computation of $G^{*(-i)}(\bm{x}_i)$.}
Direct computation of $G^{*(-i)}(\bm{x}_i)$ requires $O(p^3)$ operations ($i=1,\dots,n$) because we need to compute the inverse matrix $\left\{(\tilde{\bm{X}}^{(-i)})^{\mathsf{T}}\tilde{\bm{X}}^{(-i)} + \bm{\Delta}\right\}^{-1}$ to obtain $\hat{\bm{\alpha}}_d^{*(-i)} $.  However, the following theorem leads to efficient computation of $(\tilde{\bm{x}}_i - \tilde{\bar{\bm{x}}}_j)^{\mathsf{T}}\hat{\bm{\alpha}}_d^{*(-i)}$.  
\begin{Theorem}\label{thm:x^Talpha}
Given input vector $\tilde{\bm{x}} = (1,\bm{x}^{\mathsf{T}})^{\mathsf{T}}$, we have
	\begin{eqnarray}
	\hat{\bm{\alpha}}_d^{*(-i)}             &=& \hat{\bm{\alpha}}_d^* + \frac{\hat{y}_{id}-y_{id}}{1-h_{ii}}\bm{c}_i, \label{eq:ialpha}\\
	\tilde{\bm{x}}^{\mathsf{T}}\hat{\bm{\alpha}}_d^{*(-i)} &=& \sum_{k =1}^n y_{kd}h_k + \frac{(\hat{y}_{id}-y_{id})h_i}{1-h_{ii}} , \label{eq:xialpha}
\end{eqnarray}
where $\bm{c}_i=(\tilde{\bm{X}}^{\mathsf{T}}\tilde{\bm{X}} + \bm{\Delta})^{-1}\tilde{\bm{x}}_i$, $h_k = \tilde{\bm{x}} ^{\mathsf{T}} (\tilde{\bm{X}}^{\mathsf{T}}\tilde{\bm{X}} + \bm{\Delta})^{-1}\tilde{\bm{x}}_k$, $h_{ik} = \tilde{\bm{x}}_i^{\mathsf{T}}  (\tilde{\bm{X}}^{\mathsf{T}}\tilde{\bm{X}} + \bm{\Delta})^{-1}\tilde{\bm{x}}_k$, and $\hat{y}_{id} = \sum_{k=1}^nh_{ik}y_{kd}$. 
\end{Theorem}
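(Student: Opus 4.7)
The plan is to derive both identities as consequences of a single rank-one downdate of the ridge Gram matrix via the Sherman--Morrison formula. The crucial observation that makes the argument clean is that $\bm{y}_d^{*(-i)}$ is literally $\bm{y}_d$ with the $i$th coordinate removed (rather than recomputed as $\bm{y}_d^{(-i)}$). Writing $\bm{M} := \tilde{\bm{X}}^{\mathsf{T}}\tilde{\bm{X}} + \bm{\Delta}$, this yields the clean decompositions $(\tilde{\bm{X}}^{(-i)})^{\mathsf{T}}\tilde{\bm{X}}^{(-i)} + \bm{\Delta} = \bm{M} - \tilde{\bm{x}}_i\tilde{\bm{x}}_i^{\mathsf{T}}$ and $(\tilde{\bm{X}}^{(-i)})^{\mathsf{T}}\bm{y}_d^{*(-i)} = \tilde{\bm{X}}^{\mathsf{T}}\bm{y}_d - \tilde{\bm{x}}_i y_{id}$, which place the problem squarely in the classical setting of leave-one-out identities for ridge regression.

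For Eq. \eqref{eq:ialpha}, I would apply Sherman--Morrison to obtain $(\bm{M} - \tilde{\bm{x}}_i\tilde{\bm{x}}_i^{\mathsf{T}})^{-1} = \bm{M}^{-1} + \bm{c}_i\bm{c}_i^{\mathsf{T}}/(1-h_{ii})$, multiply it against the downdated right-hand side $\tilde{\bm{X}}^{\mathsf{T}}\bm{y}_d - \tilde{\bm{x}}_i y_{id}$, and expand into four pieces. Using the elementary identities $\bm{c}_i^{\mathsf{T}}\tilde{\bm{X}}^{\mathsf{T}}\bm{y}_d = \tilde{\bm{x}}_i^{\mathsf{T}}\hat{\bm{\alpha}}_d^* = \hat{y}_{id}$ (the second equality being the definition of $\hat{y}_{id}$) and $\bm{c}_i^{\mathsf{T}}\tilde{\bm{x}}_i = h_{ii}$, the three $\bm{c}_i$-contributions collapse after a single common-denominator step to the single coefficient $(\hat{y}_{id}-y_{id})/(1-h_{ii})$, which is exactly the claimed update rule. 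For Eq. \eqref{eq:xialpha}, I would simply left-multiply Eq. \eqref{eq:ialpha} by $\tilde{\bm{x}}^{\mathsf{T}}$, using $\tilde{\bm{x}}^{\mathsf{T}}\bm{c}_i = h_i$ and $\tilde{\bm{x}}^{\mathsf{T}}\hat{\bm{\alpha}}_d^* = \tilde{\bm{x}}^{\mathsf{T}}\bm{M}^{-1}\tilde{\bm{X}}^{\mathsf{T}}\bm{y}_d = \sum_{k=1}^n h_k y_{kd}$ to conclude.

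I do not anticipate a real conceptual obstacle, since the entire argument becomes algebraic once the rank-one downdate structure is exposed; indeed, the content of the theorem is essentially that the substitution of $\bm{y}_d^{*(-i)}$ for $\bm{y}_d^{(-i)}$ is precisely what allows the classical leave-one-out ridge identities to carry over to the multiclass LDA setting. The only care required is in notational bookkeeping, in particular keeping the three hat-matrix quantities $h_{ii}$, $h_i$, and $h_{ik}$ distinct, and in getting the sign of the rank-one correction right (a rank-one subtraction from $\bm{M}$ producing a rank-one addition to $\bm{M}^{-1}$).
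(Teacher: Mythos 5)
Your proposal is correct and follows essentially the same route as the paper: the paper's Appendix C also writes $(\tilde{\bm{X}}^{(-i)})^{\mathsf{T}}\tilde{\bm{X}}^{(-i)} + \bm{\Delta}$ as the full Gram matrix minus the rank-one term $\tilde{\bm{x}}_i\tilde{\bm{x}}_i^{\mathsf{T}}$, inverts it by the Woodbury (Sherman--Morrison) formula, multiplies against $\tilde{\bm{X}}^{\mathsf{T}}\bm{y}_d - y_{id}\tilde{\bm{x}}_i$, and collapses the $\bm{c}_i$ terms to the coefficient $(\hat{y}_{id}-y_{id})/(1-h_{ii})$, with Eq.~\eqref{eq:xialpha} obtained immediately by left-multiplying Eq.~\eqref{eq:ialpha} by $\tilde{\bm{x}}^{\mathsf{T}}$. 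No gaps.
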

\begin{proof}
The proof is presented in Appendix \ref{app:thm:x^Talpha}.
\end{proof}
Theorem \ref{thm:x^Talpha} implies that we do not need to compute $\left\{(\tilde{\bm{X}}^{(-i)})^{\mathsf{T}}\tilde{\bm{X}}^{(-i)} + \bm{\Delta}\right\}^{-1}$ $(i=1,\dots,n)$ to obtain $\tilde{\bm{x}}^{\mathsf{T}}\hat{\bm{\alpha}}_d^{*(-i)}$; we only need to calculate $(\tilde{\bm{X}}^{\mathsf{T}}\tilde{\bm{X}} + \bm{\Delta})^{-1}$.  Because $(\tilde{\bm{X}}^{\mathsf{T}}\tilde{\bm{X}} + \bm{\Delta})^{-1}$ does not depend on $i$, it can be computed before CV is conducted.

Using Theorem \ref{thm:x^Talpha}, we have
\begin{eqnarray}
	\tilde{\bm{x}}_i^{\mathsf{T}}\hat{\bm{\alpha}}_d^{*(-i)} &=& \frac{\hat{y}_{id} - y_{id}h_{ii}}{1 - h_{ii}}, \label{eq:x_ialpha}\\
	\tilde{\bar{\bm{x}}}_j^{(-i)T}\hat{\bm{\alpha}}_d^{*(-i)} &=& \frac{1}{n_j^{(-i)}}\sum_{k:y_k=j,k\neq i}\hat{y}_{kd} + \frac{\hat{y}_{id}-y_{id}}{1-h_{ii}} \left(\frac{1}{n_j^{(-i)}}\sum_{k:y_k=j,k\neq i}h_{ki}\right),  \label{eq:x_jbaralpha}\\
	\sum_{k \neq i}^n(\tilde{\bm{x}}_k^{\mathsf{T}}\hat{\bm{\alpha}}_d^{*(-i)})^2 &=& \sum_{k \neq i}^n \left( \sum_{r =1}^n y_{rd}h_{kr} + \frac{\hat{y}_{id}-y_{id}}{1-h_{ii}} h_{ki} \right)^2 \nonumber \\
	&=& \sum_{k \neq i}^n \hat{y}_{kd}^2 +2a_i \sum_{k \neq i}^n \hat{y}_{kd} h_{ki} +a_i^2 \sum_{k \neq i}^n h_{ki}^2  ,\label{eq:x^Talpha_-i_eigenvalue}
\end{eqnarray}
where  $\bm{c}_i=(\tilde{\bm{X}}^{\mathsf{T}}\tilde{\bm{X}} + \bm{\Delta})^{-1}\tilde{\bm{x}}_i$ and $a_i=\frac{\hat{y}_{id}-y_{id}}{1-h_{ii}}$.  These values are required to compute $G^{*(-i)}(\bm{x}_i)$. Using the above-mentioned formulae is far more efficient than direct computation of $\left\{(\tilde{\bm{X}}^{(-i)})^{\mathsf{T}}\tilde{\bm{X}}^{(-i)} + \bm{\Delta}\right\}^{-1}$ because it requires only $O(n)$ operations once $\hat{\bm{y}}$ and $\bm{H}$ are computed.

Algorithm \ref{algorithm:CV} summarizes our efficient algorithm for CV of multiclass LDA.  
\begin{algorithm}
	\caption{Efficient algorithm for CV of LDA}
	\label{algorithm:CV}
	\begin{algorithmic}[1]
		\STATE Calculate $\bm{S}_{W,\delta}^{-1/2}\bm{S}_B\bm{S}_{W,\delta}^{-1/2}$, and then obtain the $D$ largest eigenvalues and the corresponding eigenvectors, say $\lambda_d$ and $\bm{s}_d$ ($d=1,\dots,D$).  
		\STATE Calculate $\bm{H}  = (h_{ij}) = \bm{X}(\bm{X}^{\mathsf{T}}\bm{X} + \Delta)^{-1}\bm{X}^{\mathsf{T}}$ .  
		\FOR{$d=1$ to $D$}
		\STATE Calculate the response vector $\bm{y}_d =(\bm{y}_{1d}^{\mathsf{T}},\dots,\bm{y}_{Jd}^{\mathsf{T}})^{\mathsf{T}}$ as follows:
		$$	\bm{y}_{jd} = \xi_{jd}\bm{1}_{n_j},\quad \xi_{jd} = \frac{1}{\lambda_d} (\bar{\bm{x}}_{j}-\bar{\bm{x}})^{\mathsf{T}}\bm{S}_{W,\delta}^{-1/2} \bm{s}_d \quad (j=1,\dots,J)
		$$
		\STATE Calculate 
		$\hat{\bm{y}}_d = \bm{H}\bm{y}_d$.
		\ENDFOR
		\FOR{$i=1$ to $n$}
		\FOR{$d=1$ to $D$}
		\STATE Compute $\tilde{\bm{x}}_i^{\mathsf{T}}\hat{\bm{\alpha}}_d^{*(-i)}$, $\tilde{\bar{\bm{x}}}_j^{\mathsf{T}}\hat{\bm{\alpha}}_d^{*(-i)}$, $\sum_{k \neq i}^n(\tilde{\bm{x}}_k^{\mathsf{T}}\hat{\bm{\alpha}}_d^{*(-i)})^2$, and $\hat{\bm{\beta}}_d^{*(-i)T}\hat{\bm{\beta}}_d^{*(-i)} $ by Eqs. \eqref{eq:ialpha} and \eqref{eq:x_ialpha}--\eqref{eq:x^Talpha_-i_eigenvalue}, and obtain $(1+\lambda_d^{*(-i)})^2 \left\{ (\tilde{\bm{x}}_i - \tilde{\bar{\bm{x}}}_j)^{\mathsf{T}}\hat{\bm{\alpha}}_d^{*(-i)} \right\}^2 $.
		\ENDFOR
\STATE 
Allocate $\bm{x}_i$ to class $G^{*(-i)}(\bm{x}_i)$ using Eq. \eqref{eq:allocCV}. 
\ENDFOR
		\STATE Calculate the CV value as follows:
		\begin{equation*}
\mbox{CV} =\frac{1}{n}\sum_{i=1}^{n} I\left[y_{i}\neq G^{(-i)}(\bm{x}_i)\right].
\end{equation*}

	\end{algorithmic}
\end{algorithm}

\subsection{Theoretical justification}\label{sec:theoreticaljustification}
Algorithm \ref{algorithm:CV} is not exactly equivalent to CV in LDA because we use $\bm{y}_{jd}^{*(-i)}$ and $\lambda_{d}^{*(-i)}$ instead of $\bm{y}_{jd}^{(-i)}$ and $\lambda_{d}^{(-i)}$, respectively.  Thus, the class allocations $G^{*(-i)}(\bm{x}_i)$ and $G^{(-i)}(\bm{x}_i)$ can be different.  However, $G^{*(-i)}(\bm{x}_i)$ and $G^{(-i)}(\bm{x}_i)$ are shown to be asymptotically equivalent.  We consider asymptotics where the number of observations for each class is sufficiently large while the number of classes is fixed.
\begin{Assumption}\label{assumption}
Consider the case where $n_j/n \rightarrow c_j \in (0,\infty)$ as $n \rightarrow \infty$ ($j=1,\dots,J$). 	We assume that $\bar{\bm{x}}_j {\xrightarrow{\rm a.s.}} \bm{\mu}_j$ and $\sum_{i:y_i=j} \bm{x}_i\bm{x}_i^{\mathsf{T}}/n_j {\xrightarrow{\rm a.s.}} \bm{A}_{j}$ ($j=1,\dots,J$).  
\end{Assumption}
Under Assumption \ref{assumption}, we obtain the following proposition.
\begin{prop}\label{prop:consistency}
	Under Assumption \ref{assumption}, we have
\begin{eqnarray}
&\xi_{jd} \as  \xi_{0,jd},\label{eq:propconsistency1}\\
	& (1+\lambda_d^{(-i)})^2\left\{(\tilde{\bm{x}}_i - \tilde{\bar{\bm{x}}}_j^{(-i)})^{\mathsf{T}}\hat{\bm{\alpha}}_d^{(-i)}\right\}^2 
	 - (1+\lambda_d^{*(-i)})^2\left\{(\tilde{\bm{x}}_i - \tilde{\bar{\bm{x}}}_j^{(-i)})^{\mathsf{T}}\hat{\bm{\alpha}}_d^{*(-i)} \right\}^2 \as 0.\label{eq:propconsistency2}
\end{eqnarray}
Here, $ \xi_{0,jd}$ is a constant value.
\end{prop}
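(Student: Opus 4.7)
The plan is to establish a.s.\ convergence of all empirical moments to their population limits and then propagate this through the definitions of $\xi_{jd}$, $\lambda_d^{(\cdot)}$, and $\hat{\bm{\alpha}}_d^{(\cdot)}$ via the continuous mapping theorem. For part \eqref{eq:propconsistency1}, under Assumption \ref{assumption} the standard identities give $\bar{\bm{x}} \as \bm{\mu}_0 := \sum_j c_j \bm{\mu}_j$, $\bm{S}_B \as \bm{\Sigma}_B := \sum_j c_j(\bm{\mu}_j - \bm{\mu}_0)(\bm{\mu}_j - \bm{\mu}_0)^{\mathsf{T}}$, and $\bm{S}_{W,\delta} = \bm{S}_W + (\delta/n)\bm{I}_p \as \bm{\Sigma}_W := \sum_j c_j(\bm{A}_j - \bm{\mu}_j \bm{\mu}_j^{\mathsf{T}})$. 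Assuming $\bm{\Sigma}_W$ is positive definite, continuity of $\bm{M}\mapsto \bm{M}^{-1/2}$ on positive-definite matrices yields $\bm{S}_{W,\delta}^{-1/2} \bm{S}_B \bm{S}_{W,\delta}^{-1/2} \as \bm{\Sigma}_W^{-1/2} \bm{\Sigma}_B \bm{\Sigma}_W^{-1/2}$. Standard eigenvalue/eigenvector perturbation (e.g.\ Kato, Davis--Kahan), under the implicit assumption that the top $D$ eigenvalues of the limiting matrix are simple and separated from the rest, then gives $\lambda_d \as \lambda_{0,d}$ and $\bm{s}_d \as \bm{s}_{0,d}$ under a consistent sign convention. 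Combining via the continuous mapping theorem yields $\xi_{jd} \as \xi_{0,jd} := (\bm{\mu}_j - \bm{\mu}_0)^{\mathsf{T}} \bm{\Sigma}_W^{-1/2} \bm{s}_{0,d}/\lambda_{0,d}$.

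For part \eqref{eq:propconsistency2}, with $i$ fixed, removing a single observation perturbs each sample mean and second moment by an $O(1/n)$ term, so $\bar{\bm{x}}_j^{(-i)}$, $\bar{\bm{x}}^{(-i)}$, $\bm{S}_B^{(-i)}$, and $\bm{S}_{W,\delta}^{(-i)}$ share the same a.s.\ limits as their full-sample counterparts. Hence the same argument gives $\lambda_d^{(-i)} \as \lambda_{0,d}$, $\bm{s}_d^{(-i)} \as \bm{s}_{0,d}$, and $\xi_{jd}^{(-i)} \as \xi_{0,jd}$. Consequently the two response vectors $\bm{y}_d^{(-i)}$ and $\bm{y}_d^{*(-i)}$, which differ only through the replacement $\xi_{jd}^{(-i)} \leftrightarrow \xi_{jd}$, have componentwise differences vanishing a.s. Because the ridge map $\bm{y}\mapsto\{(\tilde{\bm{X}}^{(-i)})^{\mathsf{T}}\tilde{\bm{X}}^{(-i)}+\bm{\Delta}\}^{-1}(\tilde{\bm{X}}^{(-i)})^{\mathsf{T}}\bm{y}$ is linear with asymptotically bounded operator norm (its normalized Gram matrix converges to a positive-definite limit built from $c_j$, $\bm{\mu}_j$, $\bm{A}_j$), we conclude $\hat{\bm{\alpha}}_d^{(-i)} - \hat{\bm{\alpha}}_d^{*(-i)} \as \bm{0}$ with both converging to a common limit $\bm{\alpha}_{0,d}$. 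For the $\lambda^*$ term, Lemma \ref{lemma:x^a_eigenvalue} applied to the leave-one-out sample gives $(1+\lambda_d^{(-i)})^{-1}=\frac{1}{n-1}\sum_{k\neq i}(\tilde{\bm{x}}_k^{\mathsf{T}}\hat{\bm{\alpha}}_d^{(-i)})^2+\delta\hat{\bm{\beta}}_d^{(-i)T}\hat{\bm{\beta}}_d^{(-i)}$, and substituting $\hat{\bm{\alpha}}_d^{*(-i)}$ for $\hat{\bm{\alpha}}_d^{(-i)}$ introduces an a.s.\ vanishing perturbation of this quadratic form, so $\lambda_d^{*(-i)} - \lambda_d^{(-i)} \as 0$ and in particular $\lambda_d^{*(-i)} \as \lambda_{0,d}$.

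Finally, for fixed $i$ the vector $\bm{x}_i$ is a fixed realization and $\tilde{\bm{x}}_i - \tilde{\bar{\bm{x}}}_j^{(-i)} \as (0, (\bm{x}_i - \bm{\mu}_j)^{\mathsf{T}})^{\mathsf{T}}$. Combining this with the convergences above via continuous mapping, both summands on the left-hand side of \eqref{eq:propconsistency2} converge a.s.\ to the same limit $(1+\lambda_{0,d})^2\{(0,(\bm{x}_i-\bm{\mu}_j)^{\mathsf{T}})\bm{\alpha}_{0,d}\}^2$, so their difference vanishes a.s. The main technical obstacle is the eigenvector convergence: simple limiting eigenvalues (so that $\bm{s}_{0,d}$ is determined up to sign) and a consistent sign convention across the perturbed problems are required, as without a spectral gap the eigenvectors may oscillate and no pointwise a.s.\ limit exists. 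A secondary subtlety is that the un-normalized leave-one-out Gram matrix differs from the full-sample one by an $O(1)$ rank-one term, but after normalization by $n$ this perturbation is $O(1/n)$ and disappears in the limit, so standard continuous-mapping arguments apply.
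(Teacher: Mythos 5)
Your proposal is correct and follows essentially the same route as the paper's proof: almost-sure convergence of the sample moments propagates through the eigendecomposition to give $\xi_{jd}\as\xi_{0,jd}$, both $\hat{\bm{\alpha}}_d^{(-i)}$ and $\hat{\bm{\alpha}}_d^{*(-i)}$ converge to a common limit, $\lambda_d^{*(-i)}\as\lambda_{0,d}$ via Lemma \ref{lemma:x^a_eigenvalue}, and the continuous mapping theorem closes the argument. You are in fact more explicit than the paper about the requirements (a spectral gap and sign convention for eigenvector convergence, and the $O(1/n)$ effect of the leave-one-out perturbation), which the paper's terse proof leaves implicit.
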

\begin{proof}
	The proof is presented in Appendix \ref{sec:proofconsistency}.
\end{proof}
Since $\bm{y}_{jd} = \xi_{jd}\bm{1}_{n_j}$ as in Eq. \eqref{eq:responses}, Eq. \eqref{eq:propconsistency1} implies that $\bm{y}_{jd}^{*(-i)}$ and $\bm{y}_{jd}^{(-i)}$ are asymptotically equivalent.  Furthermore, Eq. \eqref{eq:propconsistency2}} suggests that $\bm{y}_{jd}^{*(-i)}$ and $\bm{y}_{jd}^{(-i)}$ asymptotically provide the same class allocation for an observation $\bm{x}_i$ because the class allocation rule is provided by Eq. \eqref{eq:allocCV}.  Therefore, it would be reasonable to perform CV using $\bm{y}_{jd}^{*(-i)}$ instead of $\bm{y}_{jd}^{(-i)}$.  

In Section \ref{sec:comparisonCV}, we investigate the numerical performance of CV. Our results show that the approximation works well when $n$ is large.   

\section{Numerical experiments with artificial data}
\label{sec:numerical}
\subsection{Data generation}
\label{sec:simsetup}
In the numerical experiment, the label of the $i$th observation, $y_i$, is generated according to a multinomial distribution with probability $P(y_{i}=j)=1/J$ $(i=1,\dots,n; \ j=1,\ldots, J)$. We then define the mean vector of each cluster, say $\bm{\mu}_{j}$ ($j=1,\dots,J$).  Given label $y_i$, the $i$th predictor vector is generated from 
\begin{align*}
	\bm{x}_i \sim N_{p}(\bm{\mu}_{y_i},\bm{I}_p), \ (i = 1, \ldots ,n).
\end{align*}
Here, two simulation models are considered as follows.  

\begin{figure}[!t]
    \subfigure[Data points generated by Model 1]{\includegraphics[width=8cm,bb=0 0 360 360]{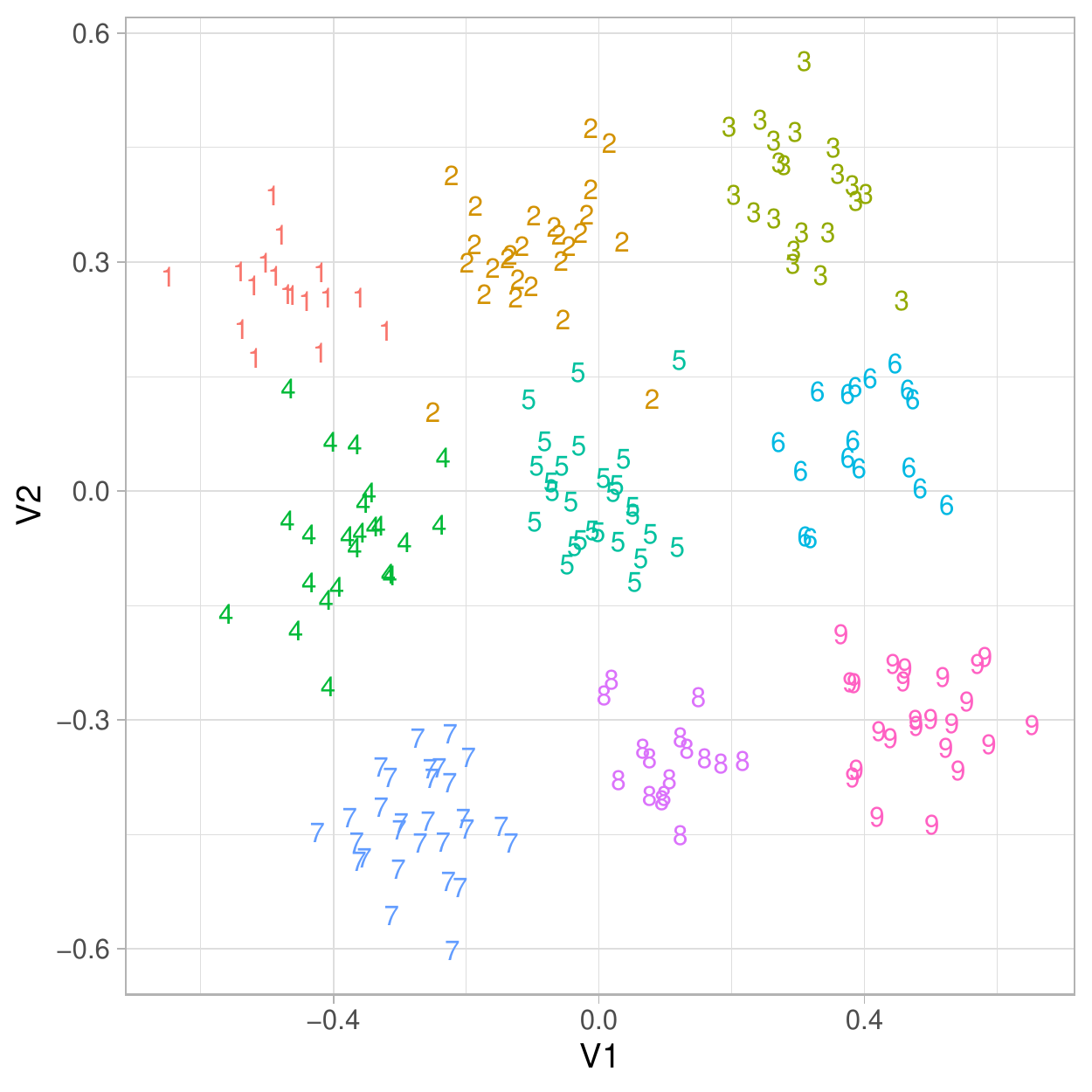}\label{fig:giji_rand}}
    \subfigure[2D projected data points generated by Model 2]{\includegraphics[width=8cm,bb=0 0 360 360]{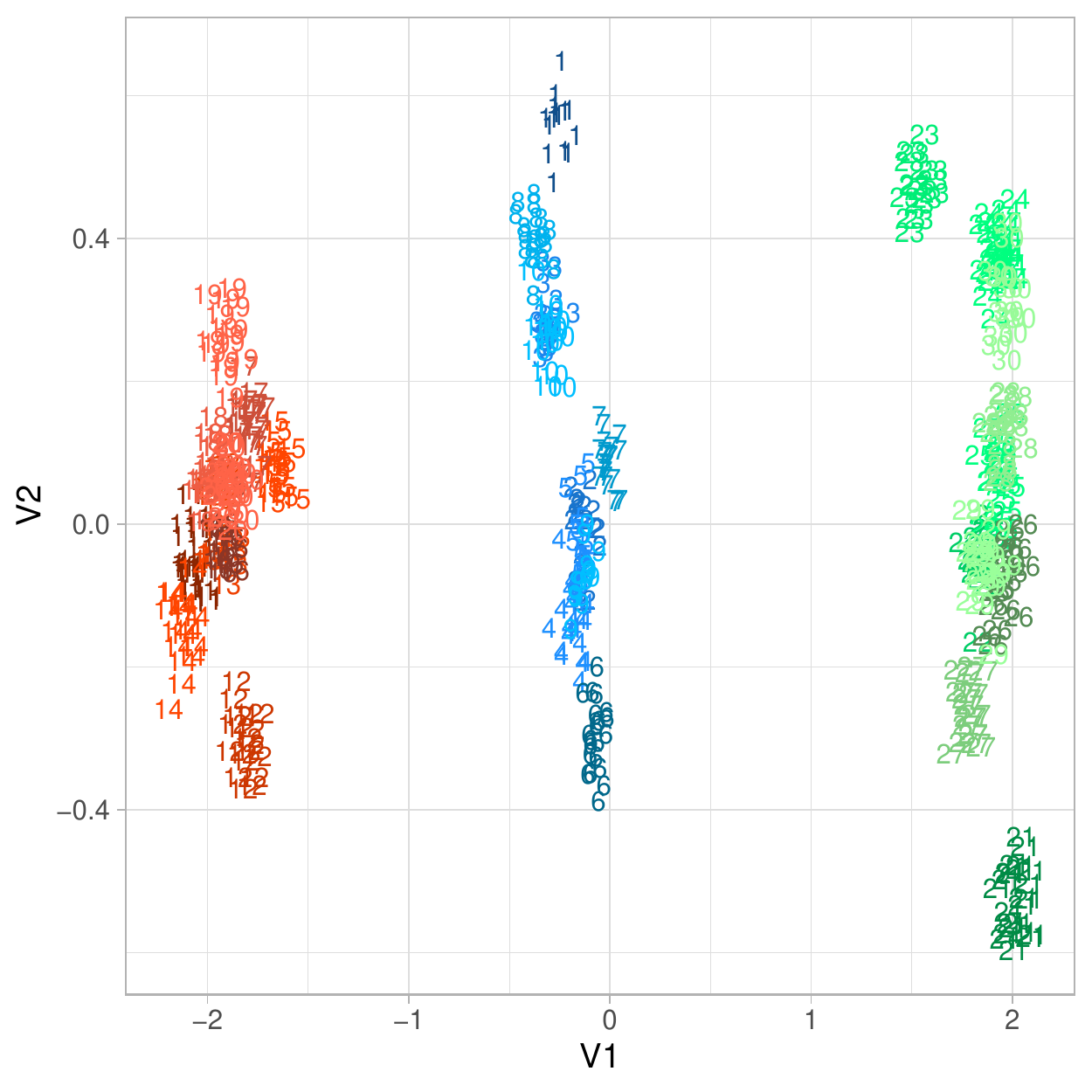}\label{fig:pl_giji}}
    \caption{2D plot of artificial dataset generated from (a) Model 1 (a) and (b) Model 2.}
    \label{fig:2Dplot_artificial}
\end{figure}

\begin{description}
	\item[Model 1] We set $J = 9$, $p=2$, and $n=200$.  The mean vectors $\bm{\mu}_j$ $(j=1,\dots,9)$ are defined as 
$$\bm{\mu}_j = 5
	\begin{pmatrix}
	\displaystyle
	\left\lfloor \frac{j-1}{3} \right\rfloor - 1, & 	
	\displaystyle
	j - 2 -3\left\lfloor \frac{j-1}{3} \right\rfloor
	\end{pmatrix}^{\mathsf{T}}.
$$
Figure \ref{fig:giji_rand} shows the data points generated from Model 1. Each class is well separated from the others; thus, ordinary LDA is expected to perform well when $D=2$.  We investigate whether our method, HLDA, performs well even when the clusters are not necessarily required for classification.  
	\item[Model 2] Let $\bm{c}_j$ $(j = 1, \dots, J)$ be
\begin{eqnarray*}
\bm{c}_j = \left\{
\begin{array}{ll}
	\bm{1}_{p} & (j=1,\dots,10),\\
	10\cdot \bm{1}_{p} & (j=11,\dots,20),\\
	-10\cdot \bm{1}_{p} & (j=21,\dots,30),
\end{array}
\right.
\end{eqnarray*}
and the mean vectors $\bm{\mu}_j (j=1, \ldots , J)$ are generated from
\begin{align*}
	\bm{\mu}_j \sim N_{p}(\bm{c}_j ,10\cdot\bm{I}_p).
\end{align*}
Therefore, the mean vector belongs to one of the three clusters of the centers: $\bm{1}_{p}$, $10\cdot \bm{1}_{p}$, and $-10\cdot \bm{1}_{p}$.  Here, we set $J = 30$, $p=20$, and $n=600$, except for Section \ref{sec:comparisonCV}, in which the performance of our fast CV is investigated.

The 2D projection of the data points is shown in Figure \ref{fig:pl_giji}.   In this case, ordinary LDA may not perform well for $D=2$ because of the difficulty in classification within a cluster.  We expect HLDA to perform better than LDA.  
	\end{description}
We employ {\tt R 4.0.2} to implement our algorithm.  The {\tt RCpp} package is used for fast computation. The OpenBLAS library is used for matrix computation.  
For implementation, we use Amazon Web Services (AWS) with Intel Xeon Platinum 8175 processors (3.1 GHz), $32$ vCPUs, $128$ GB memory, and CentOS $7$.  Throughout the experiments, the ridge parameter is $\delta =10^{-5}$.


\subsection{Illustration of our method}

With our HLDA algorithm, the number of clusters is determined by the number of steps of the hierarchical clustering algorithm, $t$; the number of clusters is $J-t$.  Note that HLDA is equivalent to LDA when $t=0$ and $t=J-1$.  

We first apply the HLDA algorithm to the dataset shown in Figure \ref{fig:2Dplot_artificial}.   Figure \ref{fig:error_sim_dim} shows the CV error as a function of $t$ ($t=0, \dots, J-1$).  The dimension of the projected data, $D$, is $D=1,2$ for Model 1 and $D=2$ for Model 2.    We note that the projected data points are equivalent to the original data points for Model 1 when $D=2$.  
\begin{figure}[!t]
\centering
    \subfigure[Model 1, $D=1$]{\includegraphics[width=5cm,bb=0 0 504 504]{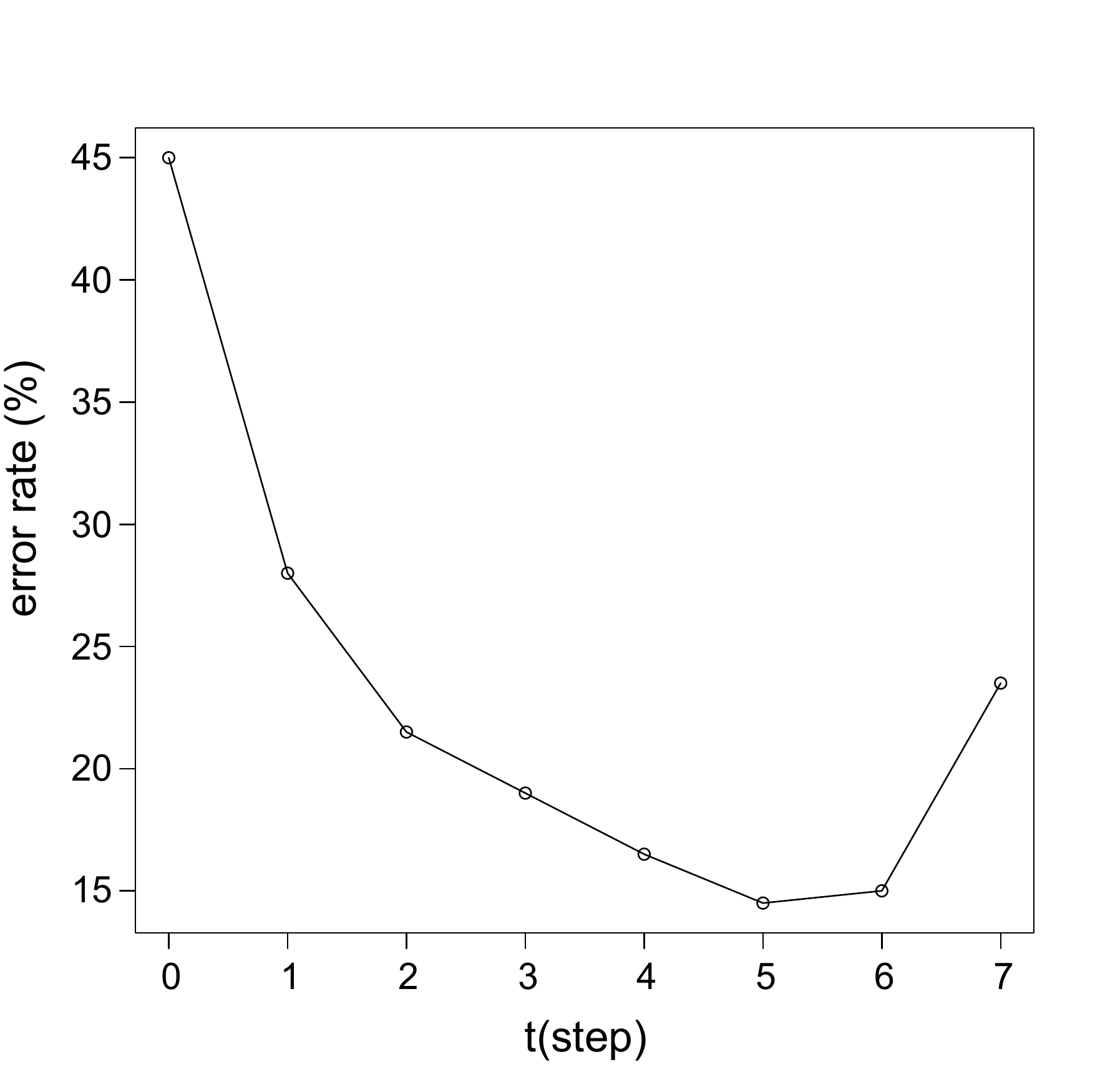}\label{fig:hiera_errorbar_giji1}}
    \subfigure[Model 1, $D=2$]{\includegraphics[width=5cm,bb=0 0 504 504]{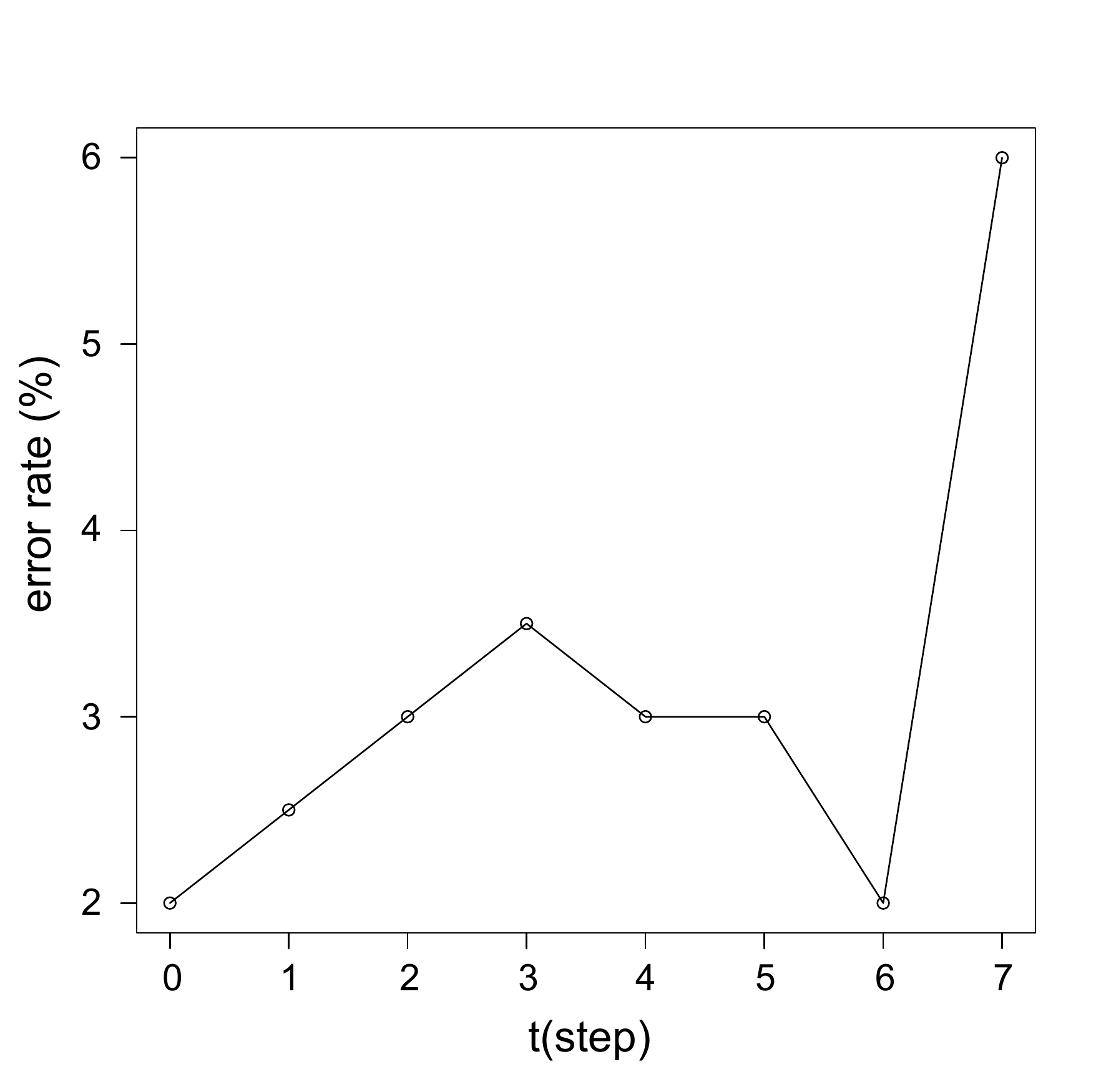}\label{fig:hiera_errorbar_giji2}}
    \subfigure[Model 2, $D=2$]{\includegraphics[width=5cm,bb=0 0 504 504]{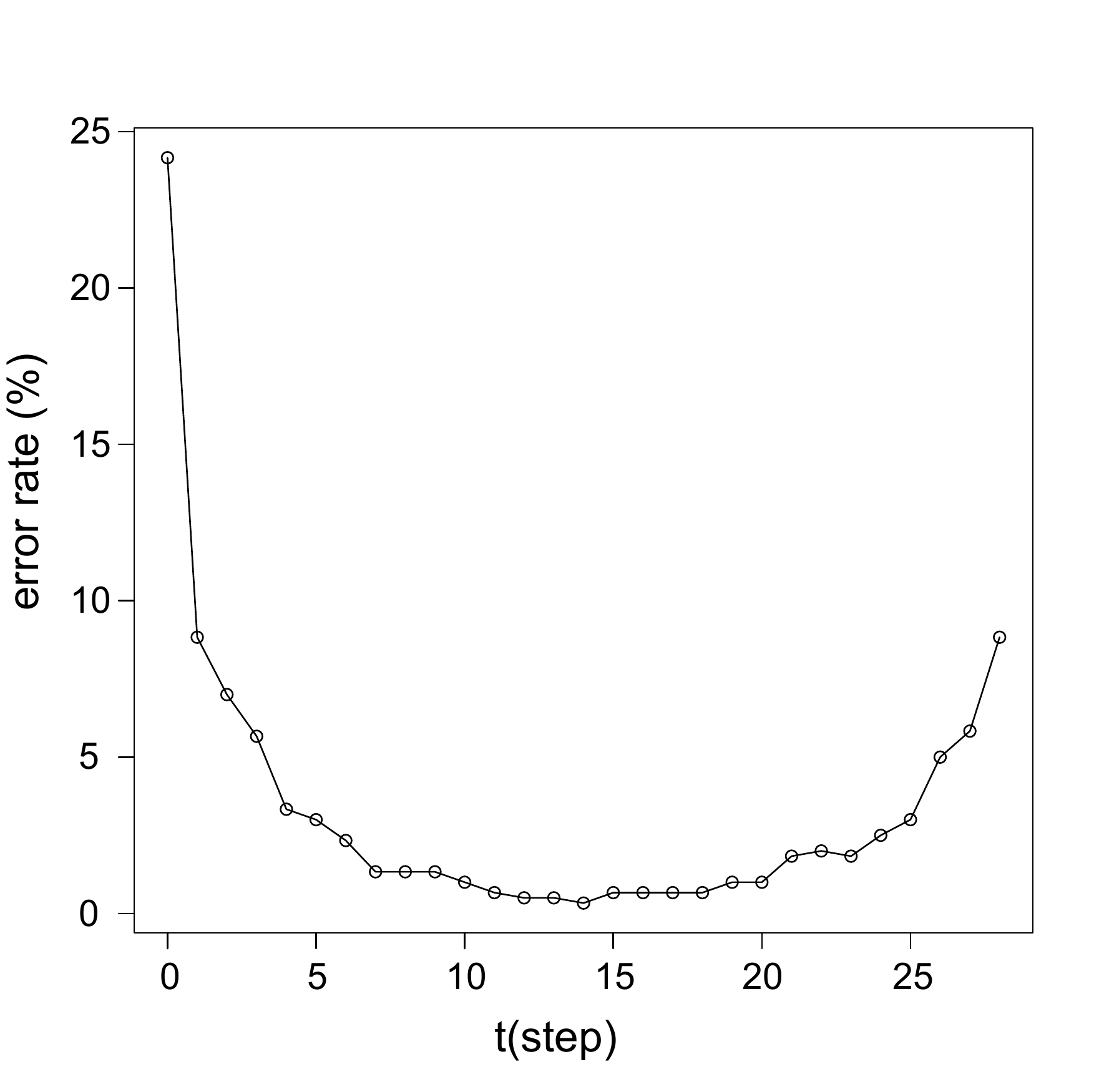}\label{fig:hiera_errorbar_giji3}}
    \caption{CV error as a function of $t$  ($t=0, \dots, J-2$).}
    \label{fig:error_sim_dim}
\end{figure}

For Model 1, ordinary LDA ($t=0$) with $D=1$ does not perform well because any one-dimensional projection causes a high CV error rate. However, HLDA results in high accuracy when $t \geq 1$; the clustering considerably improves the error rate. Meanwhile, when $D=2$, LDA performs well because the linear classification boundaries are easily constructed without clustering; therefore, the cluster analysis does not always decrease the CV error rate.  Indeed, CV selects $t=0$, which is identical to the ordinary LDA classification rule.  Thus, HLDA selects an appropriate classification rule even if clustering is not required.  For Model 2, our HLDA algorithm significantly decreases the CV error rate when $t \geq 1$, which suggests that clustering improves the prediction accuracy considerably.   The number of clusters should be relatively large to ensure a sufficiently low CV error rate.

\subsection{Prediction accuracy}
We compare the prediction accuracies of three methods: LDA, HLDA, and Ward's hierarchical clustering. The training set with $n$ observations is generated, and the classification rules for the three above-mentioned methods are then constructed. The test data with $n$ observations are generated from the same distribution as the training set, and the error rate is computed. This procedure is repeated 100 times.  

Figure \ref{fig:error_sim_dim100} shows the error rates for 100 simulated datasets as a function of $D$, the dimension of the projected space.  The error bar represents one standardization error.  For each $D$, the number of clusters of HLDA is determined such that the CV value is minimized.  The number of clusters of Ward's method is the same as that selected by HLDA.
\begin{figure}[!t]
\centering
    \subfigure[Model 1]{\includegraphics[height=7cm,bb=0 0 354 473]{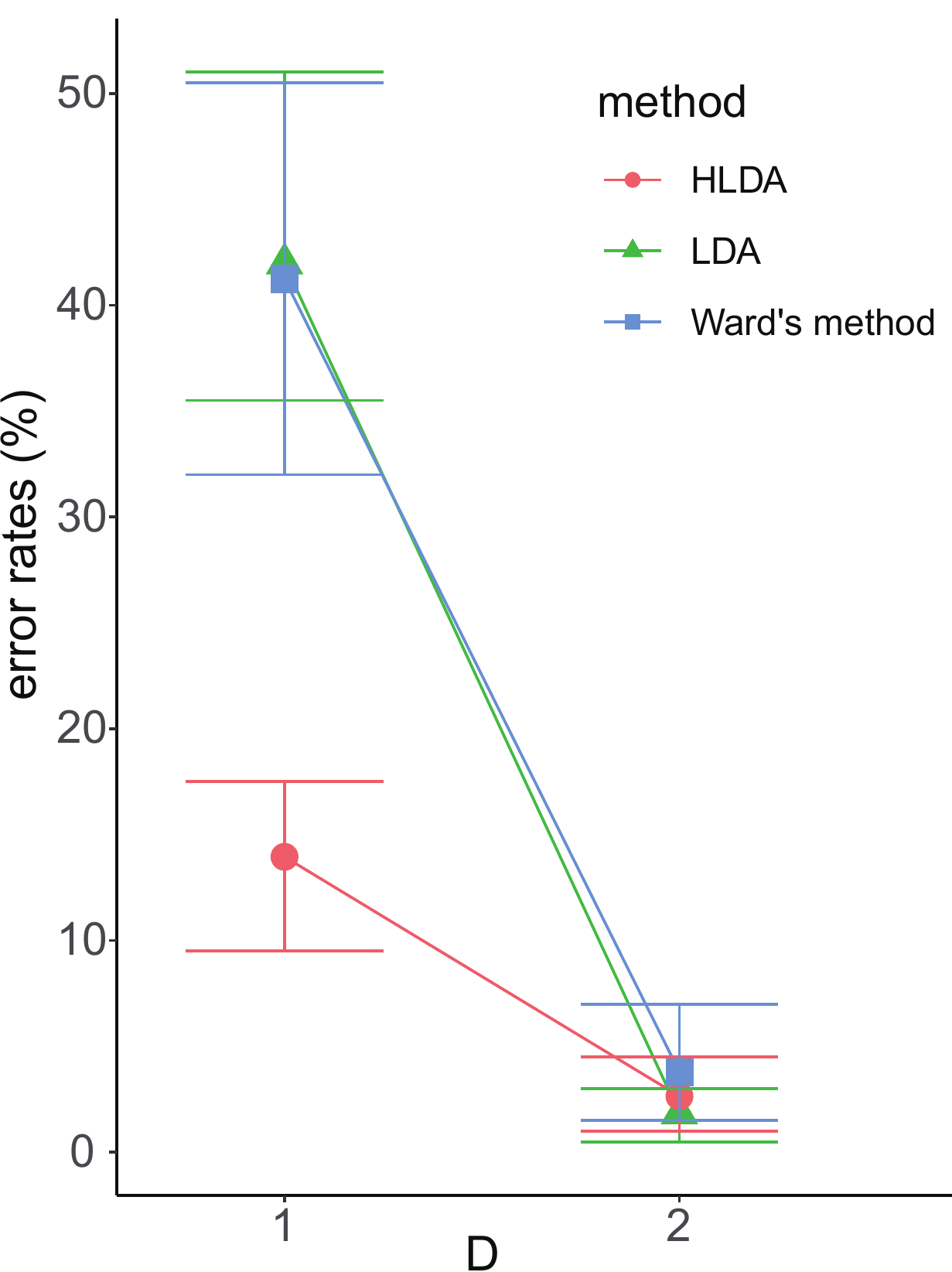}\label{fig:hiera_errorbar_giji_D1}}
    \hspace{1.2cm}
    \subfigure[Model 2]{\includegraphics[height=7cm,bb=0 0 609 474]{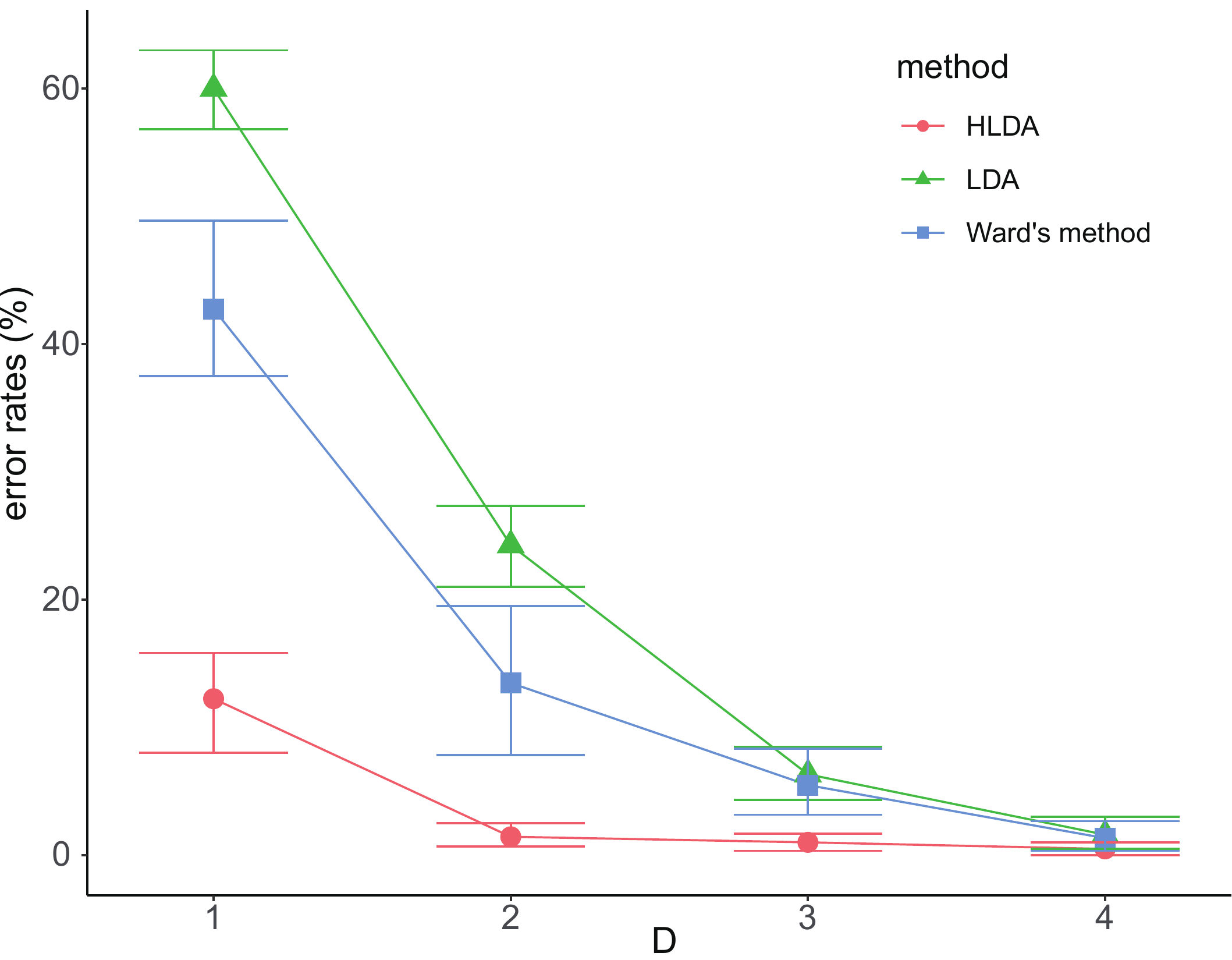}\label{fig:hiera_errorbar_giji_D2}}
    \caption{Comparison of error rates as a function of $D$, the dimension of the projected space.  The error bar represents one standardization error.}
    \label{fig:error_sim_dim100}
\end{figure}

For Model 1, HLDA performs much better than existing methods when $D=1$.  Therefore, our clustering method significantly improves the accuracy.  When $D=2$, LDA is expected to perform well, as shown in Figure \ref{fig:giji_rand}.  Indeed, LDA achieves good accuracy when $D=2$.  Nevertheless, HLDA and LDA provide nearly identical performances; HLDA may perform well even if clustering is not required.  Ward's method performs well but is slightly worse than LDA and HLDA.  

For Model 2, HLDA performs much better than existing methods, especially when $D \leq 3$.   When $D=4$, LDA performs well but is outperformed by HLDA.  We observe that the error rates become nearly 0\% within only a few steps of the HLDA algorithm when $D = 4$.

Figure \ref{fig:sim_HLDA_Ward_comparison} depicts the clustering results of the dataset shown in Figure \ref{fig:pl_giji}.   The number of clusters is 3 (i.e., $t=27$) because we assume three clusters of mean vectors in Model 2.  We remark that the 2D plots are depicted by 1 out of 100 datasets, and we observe similar tendencies for most of the datasets.  
\begin{figure}[!t]
\centering
    \subfigure[HLDA]{\includegraphics[width=7cm,bb=0 0 360 360]{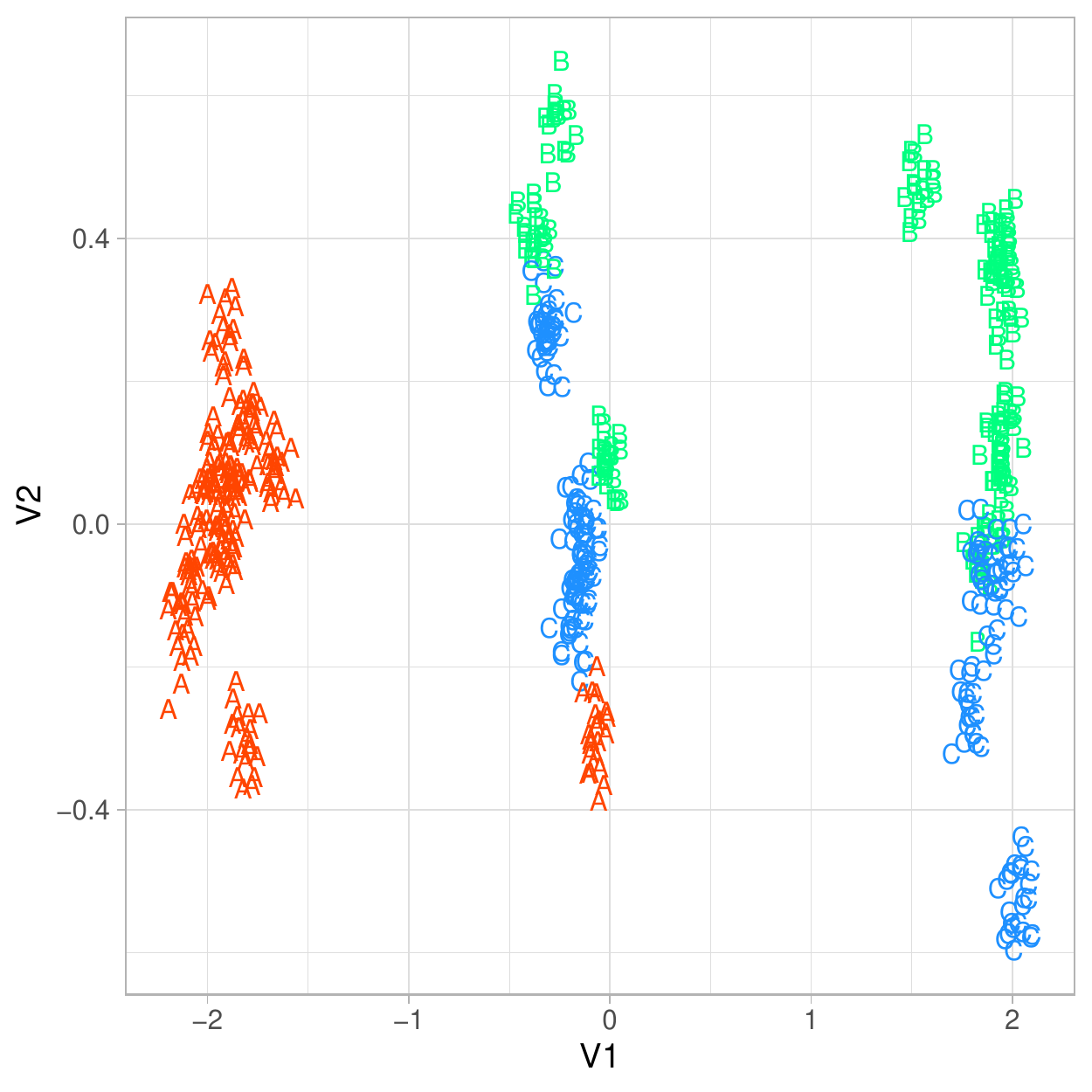}\label{fig:p_giji_clu2}}
       \hspace{1cm}
 \subfigure[Ward's method]{\includegraphics[width=7cm,bb=0 0 360 360]{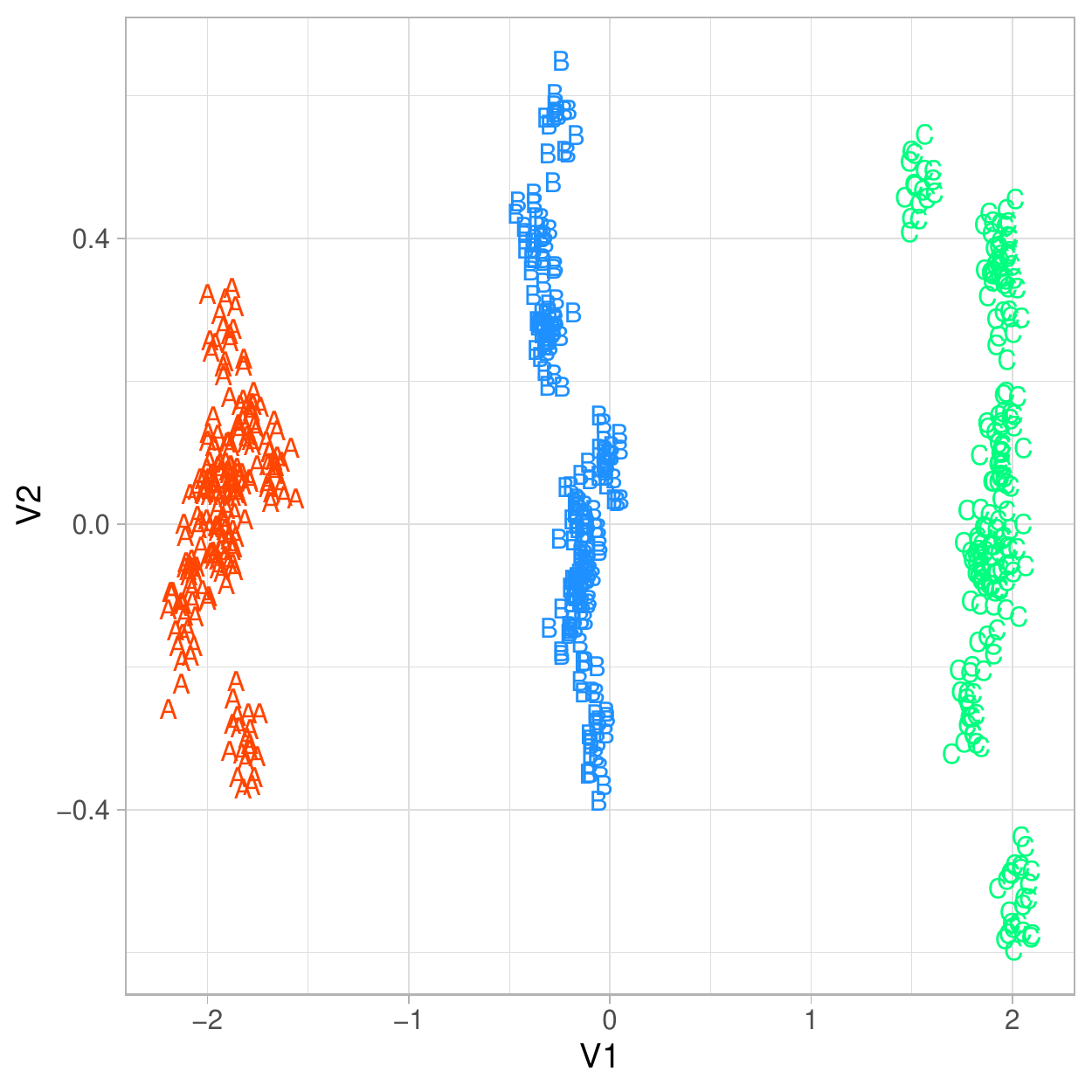}\label{fig:p_giji_clu_kmeans}}
    \caption{Comparison of clustering results obtained by (a) HLDA and (b) Ward's method for Model 2.  The 2D projection is performed by LDA.  The number of clusters is 3 (i.e., $t=27$), and the clusters are labeled as ``A", ``B", and ``C."}
    \label{fig:sim_HLDA_Ward_comparison}
\end{figure}
The results show that HLDA produces different clusters compared to Ward's method.  In particular, HLDA often creates a cluster whose classes are separated, whereas Ward's method cannot produce such a cluster.  The error rates of HLDA and Ward's method are 4.83\% and 7.17\%, respectively; thus, HLDA achieves higher accuracy than Ward's method.  
\subsection{Comparison of CV computation}\label{sec:comparisonCV}
We discuss the numerical investigation of our fast CV computation, described in Section \ref{sec:algorithm}.  In particular, we focus on the comparison of the computational timings and the accuracy of fast CV.  
\subsubsection{Comparison of computational timings}
For the timing comparison,  the datasets are generated from Model 2 described in Section \ref{sec:simsetup} with various dimensions: $p=20, 50 ,100, 200, 500, 1000$. The number of observations, $n$, is $n=6000$. We employ the fast and ordinary CV algorithms and compare their computational timings over 10 runs.  

Figure \ref{fig:timing_err_bar} shows the computational timings of the fast and ordinary CV algorithms.  The error bar represents the maximum and minimum values over 10 runs.  
\begin{figure}[!t]
	\begin{center}
    \subfigure[Computational timing as a function of $p$.]{\includegraphics[height=5cm,bb=0 0 465 338]{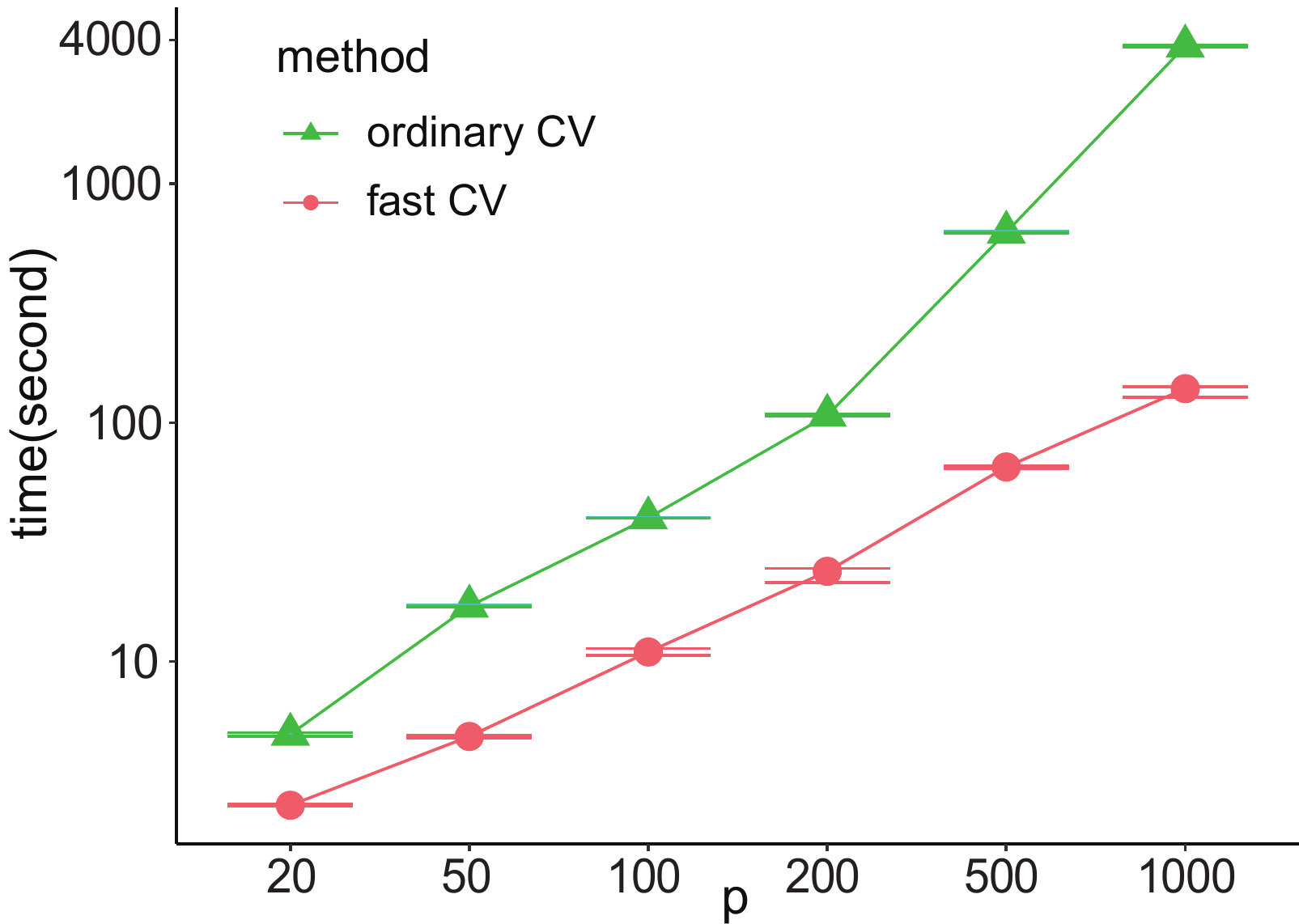}\label{fig:timing_err_bar}}
       \hspace{1cm}
    \subfigure[Error rates as a function of $n$.]{\includegraphics[height=5cm,bb=0 0 535 328]{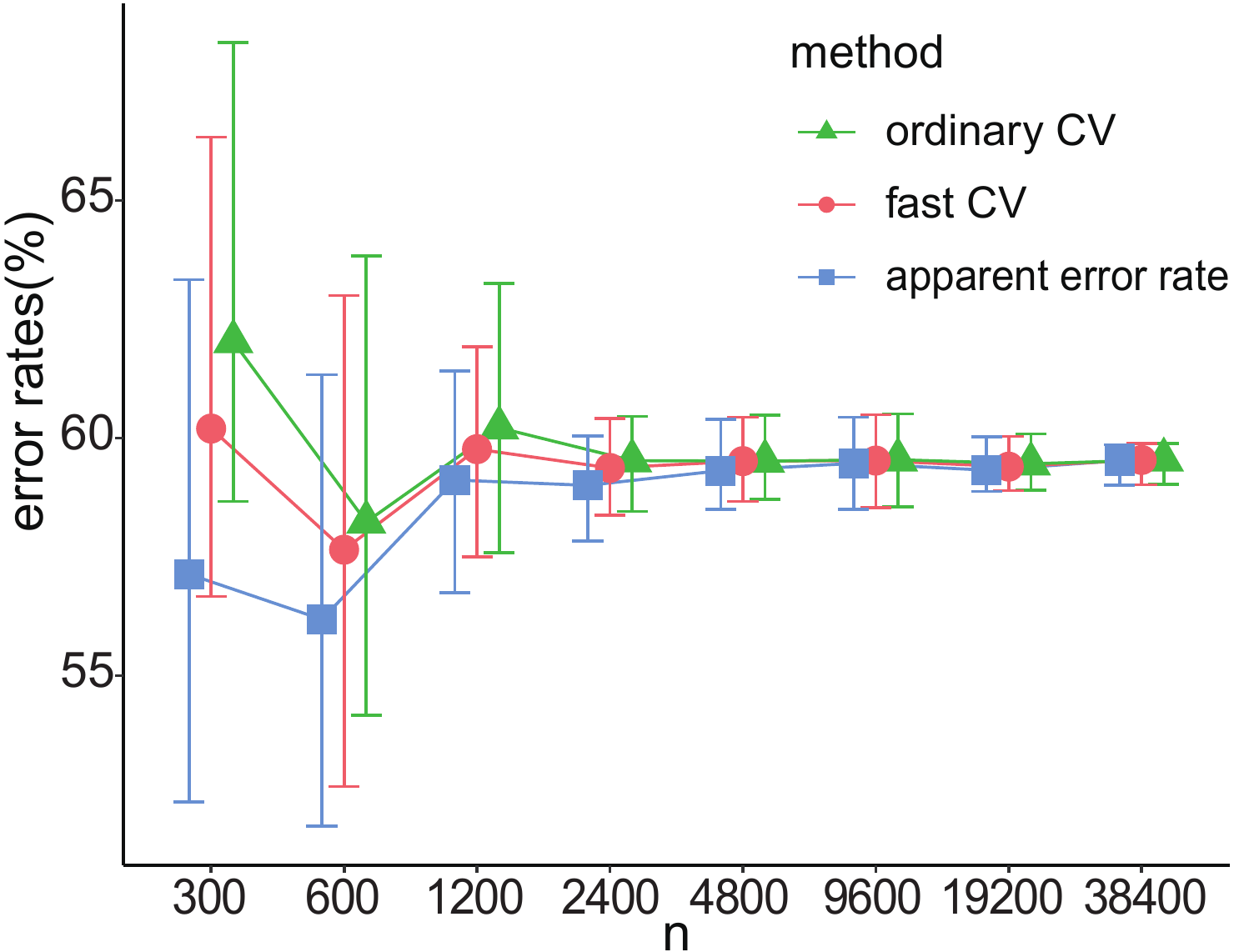}\label{fig:error_err_bar}}
		\caption{Comparison of fast CV and ordinary CV in terms of their (a) computational timings and (b) error rates. For comparison, we also compute apparent error rates (AER) in (b). The error bar represents the maximum and minimum values over 10 runs.} \label{fig:timingerror_err_bar}
	\end{center}
\end{figure}
The results show that our CV is generally faster than ordinary CV.  In particular, the difference between the timings of these two CV methods increases with $p$; therefore, our algorithm is efficient for high-dimensional data. The difference is caused by the fact that ordinary CV requires the computation of eigenvalues and eigenvectors of a $p \times p$ matrix for each observation.  

\subsubsection{Accuracy of fast CV}
The error rates of the two CV algorithms are also compared; these two CV algorithms result in slightly different values owing to some approximations of fast CV computation (see Section \ref{sec:construction of algorithm} for the details).  Note that the approximation of fast CV is justified for sufficiently large $n$, as described in Section \ref{sec:simsetup}; thus, we compare the CV values under various sample sizes, $n=300\times 2^{k-1} (k=1,\ldots,8)$, and examine whether fast CV numerically converges to ordinary CV as $n$ increases. The dimension of the predictor variable is $p=20$.   

Figure $\ref{fig:error_err_bar}$ depicts the error rates of two CV computations. For comparison, we also compute apparent error rates (AER). The error bar represents the maximum and minimum values over 10 runs. The results show that fast CV and AER underestimate true CV value because they use test data for training.  Fast CV provides a better approximation of true CV value than AER, especially for small and moderate sample sizes.  For example, when $n=2400$ (i.e., $n_j \approx 80$), the difference between two CV errors is not significant;  meanwhile, the AER may not provide a good approximation of true CV. On the other hand, for sufficient large $n$, three error rates have almost identical values.

\section{Real data analysis}
\label{sec:realdata}
We apply our method to mouse consomic strain data \citep{takada2008mouse}. The mouse consomic strain is obtained  by replacing a part of a chromosome of strain B6 with the MSM mouse's corresponding chromosome.  There are 30 classes $(J=30)$ and 36 features $(p=36)$, including body weight, body length, total cholesterol, and total protein.  The number of observations is 372; each class consists of 12 observations, on average.

The 2D data projection by LDA has already been shown in Figure \ref{fig:ldamouse}, suggesting that cluster analysis may improve the accuracy. Figure \ref{fig:error_step_realdata} shows the error rate as a function of the number of steps of our algorithm, $t$, when $D=2$.  
\begin{figure}[!t]
	\begin{center}
    \subfigure[Error rates as a function of the number of steps of the clustering algorithm, $t$, when $D=2$.]{\includegraphics[clip,height=6cm,bb=0 0 565 507]{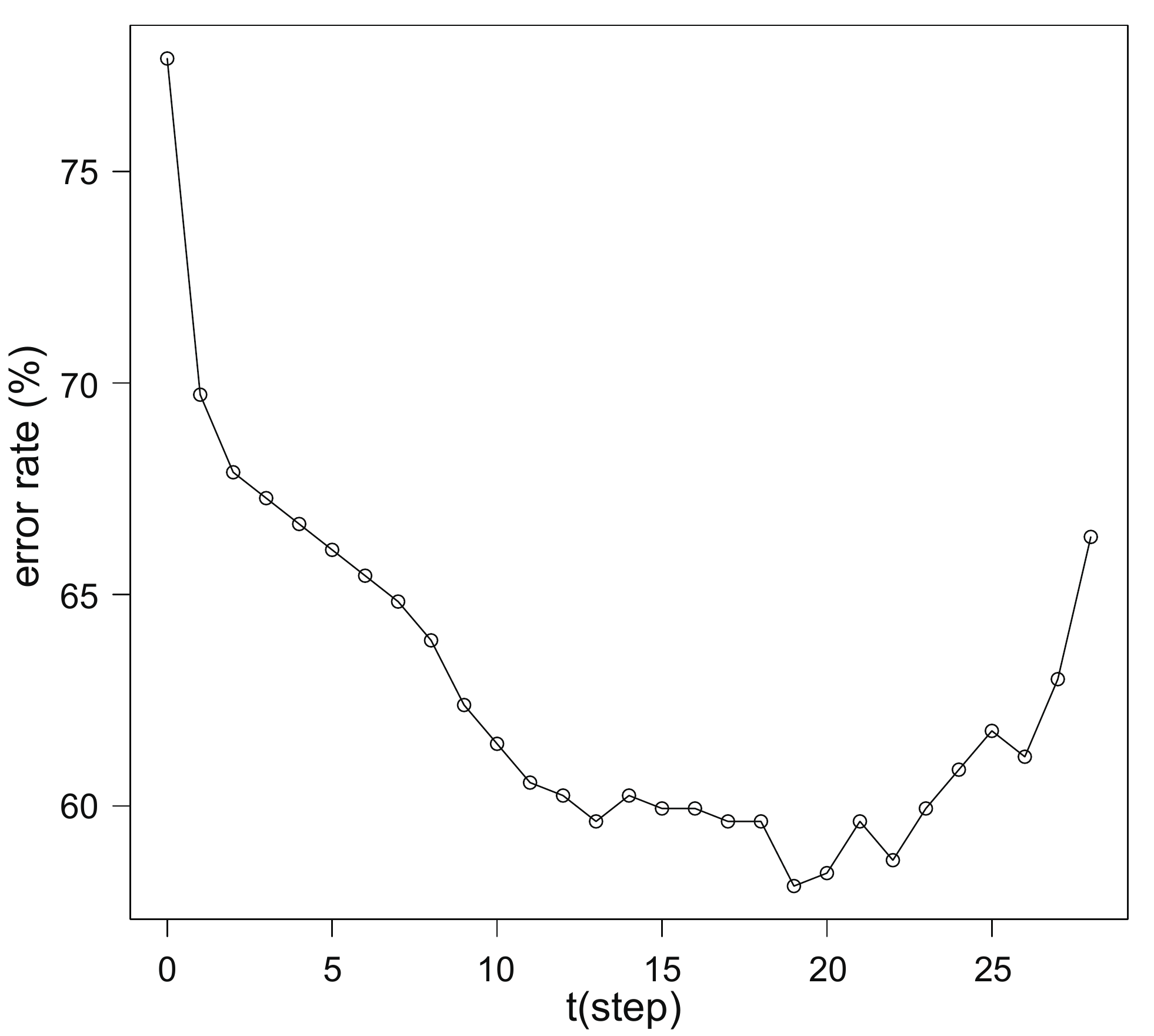}\label{fig:error_step_realdata}}
     \hspace{0.5cm}
    \subfigure[Error rates of HLDA, Ward's method, and ordinary LDA as a function of the dimension of the projected space, $D$.]{\includegraphics[clip,height=6cm,bb=0 0 639 504]{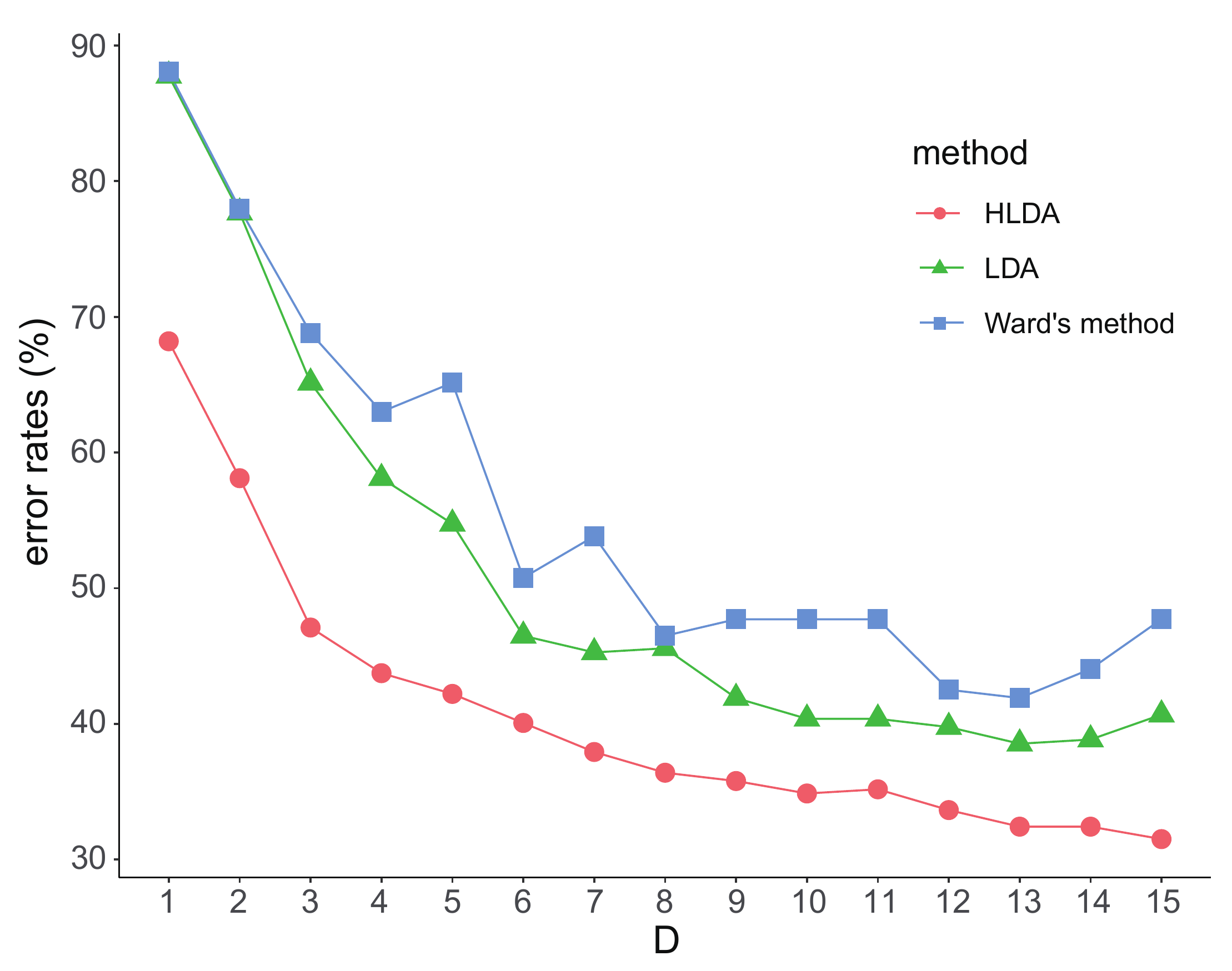}\label{fig:oresen_jitsu_err}}
		\caption{Error rates of real data analysis.}\label{fig:timingerror_err_bar}
	\end{center}
\end{figure}
The results show that our HLDA achieves higher accuracy than LDA for any $t$.  With an appropriate value of $t$, HLDA achieves an improvement of nearly 20\% in the error rate compared to LDA.  A similar tendency is found in the Monte Carlo simulation result for Model 2, as shown in Figure \ref{fig:hiera_errorbar_giji3}.   

The error rate as a function of $D$ is shown in Figure \ref{fig:oresen_jitsu_err}.  The results imply that HLDA outperforms existing methods for any $D$.   Ward's method performs worse than LDA, which suggests that it cannot achieve higher prediction accuracy than LDA for this dataset.  

Figure \ref{fig:visualization} shows the 2D projection of the data points obtained by HLDA. The number of clusters is $J=3$ for ease of interpretation of the clustering result.  The second cluster consists of only 20 observations with two classes, MSM and B6-9MSM; these two classes are separable with one dimension because $p=36$ $(> 20)$.   In particular, as shown in Figure \ref{fig:ldamouse_cluster}, MSM, which belongs to the second cluster, is far from the other classes.  Thus, the classification boundaries in the first and third clusters are not affected by MSM, allowing for reduction of the misclassification error rate.  Indeed, the CV values of HLDA and LDA are 63\% and 78\%, respectively, when $D=2$ and $J=3$.  
\begin{figure}[!t]
	\begin{center}
	 \subfigure[First 
	 cluster.]{\includegraphics[clip,height=8cm,bb=0 0 354 578]{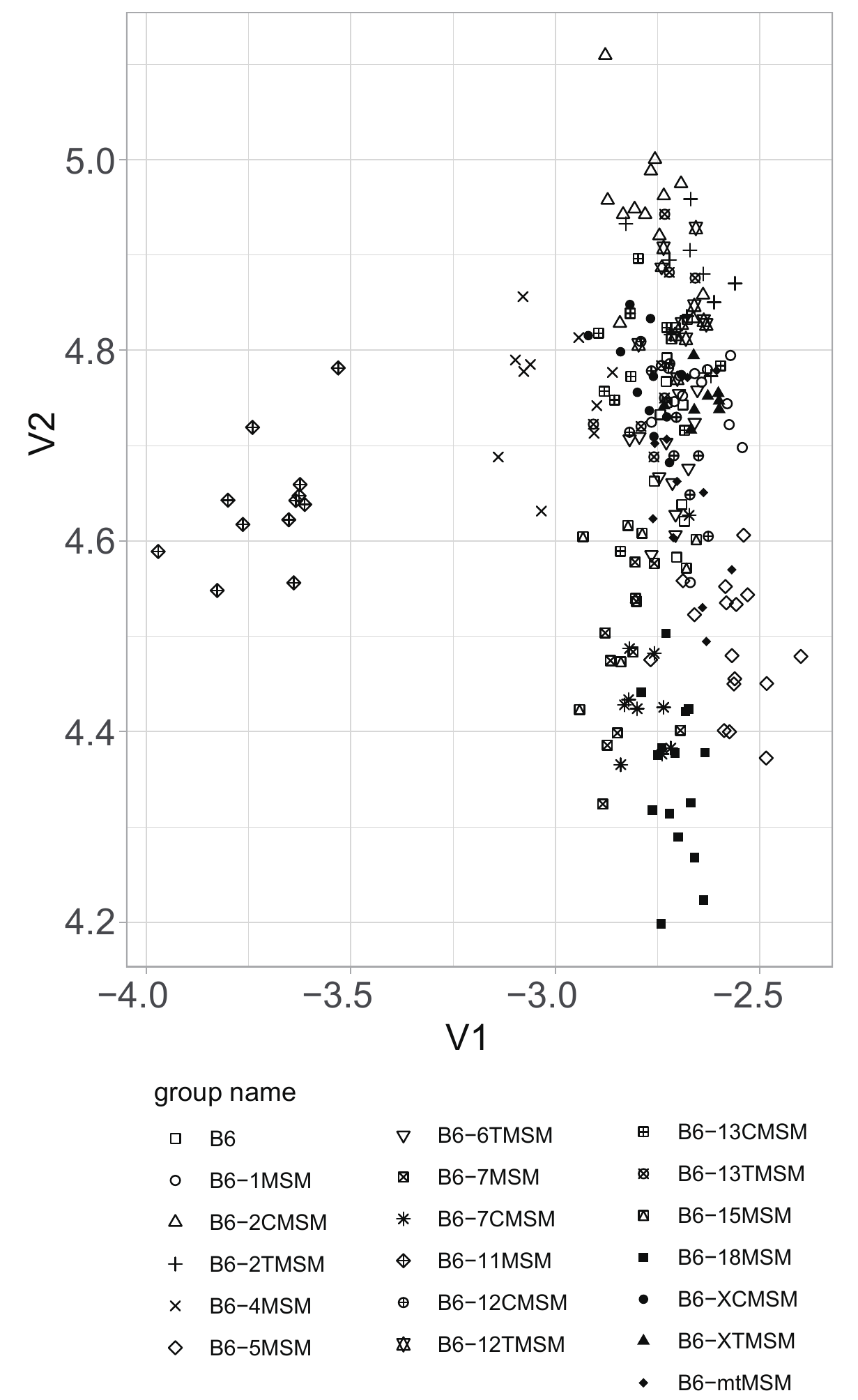}\label{fig:2dplot_realdata_1}}
    \hspace{0.2cm}    
\subfigure[Second cluster.]{\includegraphics[clip,height=8cm,bb=0 0 354 578]{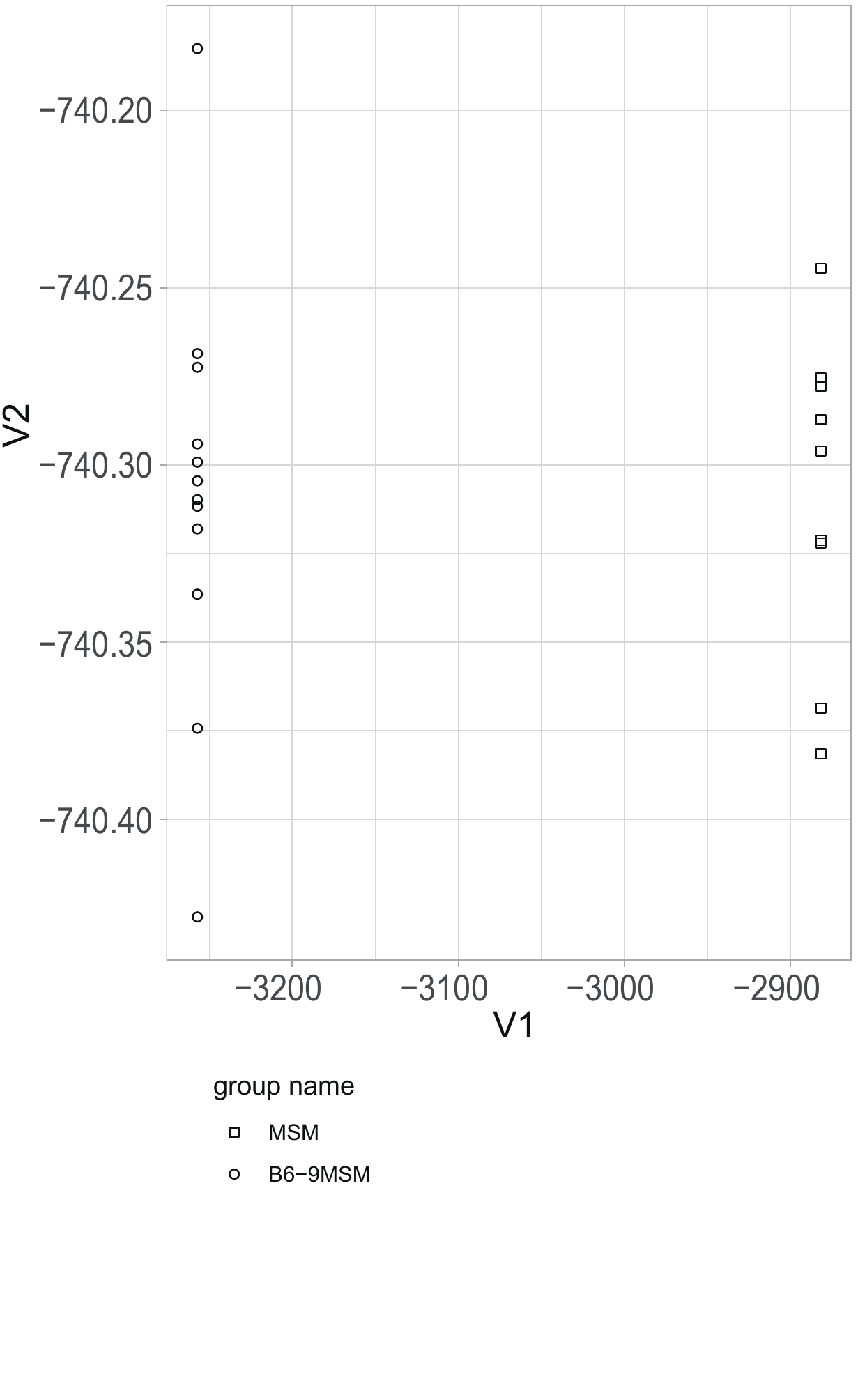}\label{fig:2dplot_realdata_1}}    
     \hspace{0.5cm}
     \subfigure[Third cluster.]{\includegraphics[clip,height=8cm,bb=0 0 354 578]{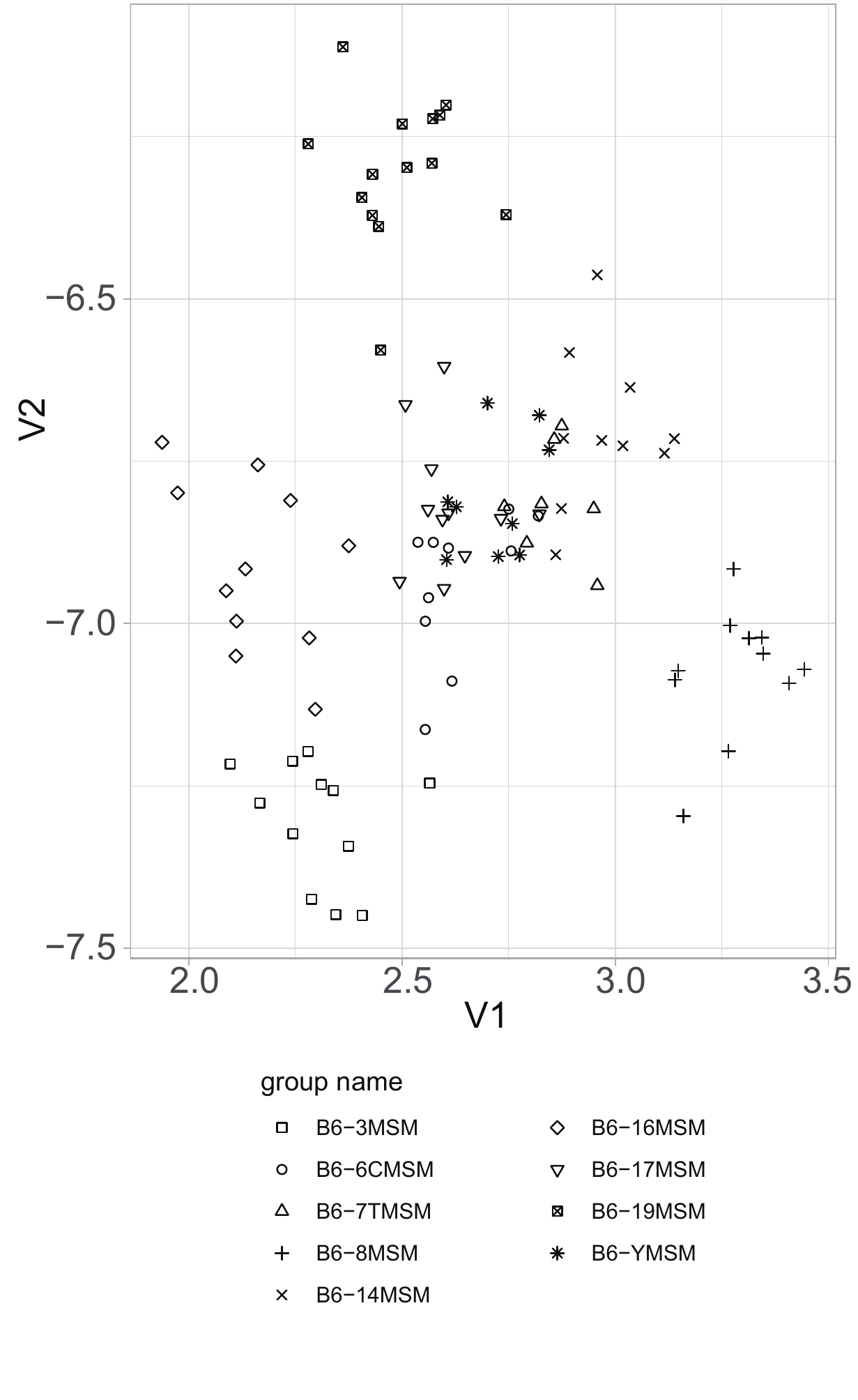}\label{fig:2dplot_realdata_1}}
	\end{center}

		\caption{2D plots for two clusters when $J=3$. } \label{fig:visualization}
\end{figure}

\section{Conclusion}
\label{sec:conclusion}
We considered a multiclass classification problem based on dimensionality reduction. In particular, we mainly dealt with a dataset that consists of a relatively large number of classes; for example, in real data analysis, we analyzed consomic strain data consisting of 30 classes. Through numerical experiments, we showed that our method achieves higher prediction accuracy than ordinary LDA while maintaining its advantages, including visualization.  

Our approach is based on hierarchical clustering; therefore, we cannot apply it to a dataset that consists of a massive number of classes (e.g., tens of thousands of classes) owing the prohibitively heavy computational load. As an alternative to hierarchical clustering, non-hierarchical clustering, such as $k$-means clustering or convex clustering \citep{chen2015convex,donnat2019convex}, can be applied to large-scale datasets. In the future, it would be interesting to extend our method to a non-hierarchical clustering technique.
\appendix
\section*{Appendix}
\section{Proof of Proposition \ref{prop:eigenreg}}\label{app:prop:eigenreg}

Recall that we have considered the following eigenvalue problem:
\begin{eqnarray}
\bm{S}_{W,\delta}^{-1/2}\bm{S}_B \bm{S}_{W,\delta}^{-1/2} \bm{s}_d = \lambda_d \bm{s}_d. \label{eq:eigen}
\end{eqnarray}
Because $\bm{S}_{B}$ is expressed as
\begin{eqnarray*}
\bm{S}_{B}=\frac{1}{2n^2}\sum_{j=1}^{J}\sum_{k \neq j}^{J} n_{j}n_k (\bar{\bm{x}}_{j}-\bar{\bm{x}}_k)(\bar{\bm{x}}_{j}-\bar{\bm{x}}_k)^{\mathsf{T}},
\end{eqnarray*}
$ \bm{s}_d$ is written as a linear combination of $\bm{S}_{W,\delta}^{-1/2} (\bar{\bm{x}}_{j}-\bar{\bm{x}}_k)$ $(j,k=1,\dots,J)$, i.e.,
\begin{equation*}
	 \bm{s}_d = \sum_{j=1}^{J}\sum_{k\neq j}^{J}\gamma_{jk,d} \bm{S}_{W,\delta}^{-1/2} (\bar{\bm{x}}_{j}-\bar{\bm{x}}_k), 
\end{equation*}
where $\gamma_{jk,d}$ $(j,k=1,\cdots,J)$ are scalar. Let $\bm{a}_{jk} = \bm{S}_{W,\delta}^{-1/2} (\bar{\bm{x}}_{j}-\bar{\bm{x}}_k)$. Eq. \eqref{eq:eigen} gives
\begin{equation*}
\bm{s}_d =\frac{1}{2n^2\lambda_d}  \sum_{j=1}^{J}\sum_{k\neq j}^{J} 	n_{j}n_k (\bm{a}_{jk}^{\mathsf{T}} \bm{s}_d )\bm{a}_{jk}.
\end{equation*}
Therefore, $\gamma_{jk,d}$ is expressed as
\begin{equation*}
	\gamma_{jk,d} = \frac{1}{2n^2\lambda_d}	n_{j}n_k (\bm{a}_{jk}^{\mathsf{T}} \bm{s}_d ). \label{eq:gamma_jk}
\end{equation*}

Then, $\bm{s}_d$ is written as  
\begin{eqnarray*}
	\bm{s}_d &=&  \bm{S}_{W,\delta}^{-1/2}\sum_{j=1}^{J}\sum_{k\neq j}^{J}\gamma_{jk,d}  (\bar{\bm{x}}_{j}-\bar{\bm{x}}_k) \\
	&=&  \bm{S}_{W,\delta}^{-1/2}\sum_{j=1}^{J}\sum_{k\neq j}^{J}(\gamma_{jk,d} - \gamma_{kj,d}) \bar{\bm{x}}_{j} \\
	&=&  \bm{S}_{W,\delta}^{-1/2}\sum_{j=1}^{J}\sum_{k\neq j}^{J}(2\gamma_{jk,d}) \bar{\bm{x}}_{j}.
\end{eqnarray*}
Here, we denote 
$$\xi_j = \frac{2n}{n_j}\sum_{k\neq j}^{J}\gamma_{jk,d} = \frac{1}{n\lambda_d} \sum_{k=1}^J n_k  (\bar{\bm{x}}_{j}-\bar{\bm{x}}_k)^{\mathsf{T}}\bm{S}_{W,\delta}^{-1/2} \bm{s}_d=\frac{1}{\lambda_d} (\bar{\bm{x}}_{j}-\bar{\bm{x}})^{\mathsf{T}}\bm{S}_{W,\delta}^{-1/2} \bm{s}_d.$$  
Then, we have
\begin{eqnarray}
	\bm{s}_d =  \bm{S}_{W,\delta}^{-1/2}\sum_{j=1}^{J}\frac{\xi_j}{n} n_j\bar{\bm{x}}_{j}. \label{eq:s_xbar}
\end{eqnarray}
From Eq. \eqref{eq:s_xbar} and the fact that $\sum_{j=1}^Jn_j\xi_j=0$, the proof is complete.

\section{Proof of Theorem \ref{thm:betat}}\label{app:thm:betat}
From Eq. \eqref{eq:n2}, $\hat{\beta}_{0d}$ satisfies
\begin{equation}
	\hat{\beta}_{0d} = -\frac{1}{n} \sum_{j=1}^Jn_j\bar{\bm{x}}_j^{\mathsf{T}}\hat{\bm{\beta}}_d \label{eq:beta0}
\end{equation}
Substituting Eq. \eqref{eq:beta0} into Eq. \eqref{eq:n1} yields
\begin{equation*}
	\left(\sum_{j=1}^J\bm{X}_j^{\mathsf{T}}\bm{X}_j + \delta\bm{I}_p \right) \hat{\bm{\beta}}_d -\frac{1}{n}  \sum_{j=1}^J \sum_{k=1}^Jn_jn_k\bar{\bm{x}}_j\bar{\bm{x}}_k^{\mathsf{T}}\hat{\bm{\beta}}_d = n\bm{S}_{W,\delta}^{1/2} \bm{s}_d
\end{equation*}
Since 
\begin{eqnarray*}
n\bm{S}_{W}&=&\sum_{j=1}^J\bm{X}_j^{\mathsf{T}}\bm{X}_j  - \sum_{j=1}^Jn_j\bar{\bm{x}}_j\bar{\bm{x}}_j^{\mathsf{T}},\\
n\bm{S}_{B}&=& \sum_{j=1}^{J}n_j\bar{\bm{x}}_j\bar{\bm{x}}_j^{\mathsf{T}} - \frac{1}{n} \sum_{j=1}^J \sum_{k=1}^Jn_jn_k\bar{\bm{x}}_j\bar{\bm{x}}_k^{\mathsf{T}},
\end{eqnarray*}
we have 
\begin{equation*}
	(\bm{S}_{W,\delta} + \bm{S}_B)\hat{\bm{\beta}}_d =  \bm{S}_{W,\delta}^{1/2}\bm{s}_d.
\end{equation*}
A simple calculation gives us
\begin{equation}
	(\bm{I} + \bm{S}_{W,\delta}^{-1/2}\bm{S}_B \bm{S}_{W,\delta}^{-1/2}) \bm{S}_{W,\delta}^{1/2}\hat{\bm{\beta}}_d = \bm{s}_d. \label{eq:alphaS}
\end{equation}
Meanwhile, from Eq. \eqref{eq:eigen}, $\bm{s}_d$ is expressed as
\begin{eqnarray}
\bm{s}_d =  \frac{1}{1+\lambda_d}(\bm{I} + \bm{S}_{W,\delta}^{-1/2}\bm{S}_B \bm{S}_{W,\delta}^{-1/2})\bm{s}_d.  \label{eq:eigenSw}
\end{eqnarray}
Substituting Eq. \eqref{eq:eigenSw} into Eq. \eqref{eq:alphaS} yields
\begin{equation*}
	(\bm{I} + \bm{S}_{W,\delta}^{-1/2}\bm{S}_B \bm{S}_{W,\delta}^{-1/2}) \bm{S}_{W,\delta}^{1/2} \left(\hat{\bm{\beta}}_d - \frac{1}{1+\lambda_d} \bm{S}_{W,\delta}^{-1/2} \bm{s}_d \right) = \bm{0}. 
\end{equation*}
Because the inverse matrix of $(\bm{I} + \bm{S}_{W,\delta}^{-1/2}\bm{S}_B \bm{S}_{W,\delta}^{-1/2}) \bm{S}_{W,\delta}^{1/2}$ always exists, the proof is complete.

\section{Proof of Theorem \ref{thm:x^Talpha}}\label{app:thm:x^Talpha}
	Let $\bm{C} = \tilde{\bm{X}}^{\mathsf{T}}\tilde{\bm{X}} + \bm{\Delta}$ and $\bm{C}^{(-i)} = (\tilde{\bm{X}}^{(-i)})^{\mathsf{T}}\tilde{\bm{X}}^{(-i)} + \bm{\Delta}$. Because $\bm{C} = \bm{C}^{(-i)}  + \bm{x}_i\bm{x}_i^{\mathsf{T}}$, the inverse of $\bm{C}^{(-i)}$ is calculated using Woodbury's formula (e.g., \citealp{BA8658116X}):
	\begin{equation}
	(\bm{C}^{(-i)})^{-1} = \bm{C}^{-1} + \dfrac{\bm{C}^{-1} \tilde{\bm{x}}_i\tilde{\bm{x}}_i^{\mathsf{T}} \bm{C}^{-1}}{1 - \tilde{\bm{x}}_i^{\mathsf{T}} \bm{C}^{-1}  \tilde{\bm{x}}_i}. \label{eq:woodbury}
\end{equation}
Substituting Eq. \eqref{eq:woodbury} into Eq. \eqref{eq:alpha_-i} yields
\begin{eqnarray*}
	\hat{\bm{\alpha}}_d^{*(-i)} &=& \left(\bm{C}^{-1} + \frac{ \bm{C}^{-1}\tilde{\bm{x}}_i\tilde{\bm{x}}_i^{\mathsf{T}}\bm{C}^{-1}}{1-h_{ii}}\right)(\tilde{\bm{X}}^{\mathsf{T}}\bm{y}_d - y_{id}\tilde{\bm{x}}_i)\\
           &=& \hat{\bm{\alpha}}_d^* - y_{id}\bm{c}_i + \frac{\hat{y}_{id}\bm{c}_i - y_{id}h_{ii}\bm{c}_i}{1-h_{ii}}\\
           &=& \hat{\bm{\alpha}}_d^* + \frac{\hat{y}_{id}-y_{id}}{1-h_{ii}}\bm{c}_i,
\end{eqnarray*}
where $\bm{c}_i = \bm{C}^{-1}\tilde{\bm{x}}_i$.  Therefore, Eq. \eqref{eq:ialpha} holds.  Eq. \eqref{eq:xialpha} immediately holds using Eq. \eqref{eq:ialpha}.

\section{Proof of Lemma \ref{lemma:x^a_eigenvalue}}\label{proof:lemma:x^a_eigenvalue}
From Eq. \eqref{eq:trans_regress}, we have
\begin{eqnarray*}
	\frac{1}{n}\sum_{i=1}^n (\tilde{\bm{x}}_i^{\mathsf{T}} \hat{\bm{\alpha}}_d)^2 + \delta \hat{\bm{\beta}}_d^{\mathsf{T}}\hat{\bm{\beta}}_d &=& \frac{1}{(1+\lambda_d)^2}\bm{t}_{d,\delta}^{\mathsf{T}} \left\{ \frac{1}{n} \sum_{i=1}^n  (\bm{x}_i - \bar{\bm{x}}) (\bm{x}_i - \bar{\bm{x}})^{\mathsf{T}} + \delta \bm{I}\right\} \bm{t}_{d,\delta} \\
	&=& \frac{1}{(1+\lambda_d)^2}\bm{t}_{d,\delta}^{\mathsf{T}}(\bm{S}_{W,\delta} + \bm{S}_B)\bm{t}_{d,\delta}\\
	&=& \frac{1}{(1+\lambda_d)^2}\bm{s}_d^{\mathsf{T}}(\bm{I} + \bm{S}_{W,\delta}^{-1/2}\bm{S}_B\bm{S}_{W,\delta}^{-1/2})\bm{s}_d \\
	&=& \frac{1}{(1+\lambda_d)^2} (1+\lambda_d) \\
	&=& \frac{1}{1+\lambda_d}.
\end{eqnarray*}
The proof is complete.

\section{Proof of Proposition \ref{prop:consistency}}\label{sec:proofconsistency}
It is sufficient to show that $\xi_{jd} $ converges almost surely to some constant value because $\bm{y}_{jd}$ is defined as $\bm{y}_{jd} = \xi_{jd}\bm{1}_{n_j}$.  Under Assumption \ref{assumption},  $\bm{S}_{B,\delta}$ and $\bm{S}_{W,\delta}$ converge almost surely to some constant matrices, say $\bm{\Sigma}_{B,\delta}$ and $\bm{\Sigma}_{W,\delta}$, respectively. Let $(\lambda_{0,d},\bm{s}_{0,d})$ be the $d$th largest eigenvalue and eigenvectors of $\bm{\Sigma}_{W,\delta}^{-1/2}\bm{\Sigma}_{B,\delta}\bm{\Sigma}_{W,\delta}^{-1/2}$, respectively.  Since $\xi_{jd} $ in Eq. \eqref{eq:responses} is expressed as 
\begin{eqnarray*}
	\xi_{jd} = \frac{1}{n\lambda_d} \sum_{k=1}^J n_k  (\bar{\bm{x}}_{j}-\bar{\bm{x}}_k)^{\mathsf{T}}\bm{S}_{W,\delta}^{-1/2} \bm{s}_d  = \frac{1}{\lambda_d} (\bar{\bm{x}}_j-\bar{\bm{x}})^{\mathsf{T}}\bm{S}_{W,\delta}^{-1/2}\bm{s}_d \quad (j=1,\dots,J),
\end{eqnarray*}
we have
\begin{eqnarray*}
	\xi_{jd} \as  \xi_{0,jd}:=\frac{1}{\lambda_{0,d}}  \left({\bm{\mu}}_j-\sum_{j=1}^Jc_j\bm{\mu}_j\right)^{\mathsf{T}}\bm{\Sigma}_{W,\delta}^{-1/2}\bm{s}_{0,d}  \quad (j=1,\dots,J).
	\end{eqnarray*}
Thus, Eq. \eqref{eq:propconsistency1} holds.  It is also shown that the estimators $\hat{\bm{\alpha}}_d^{(-i)}$ and $\hat{\bm{\alpha}}_d^{*(-i)}$ converge to the same vector:
\begin{eqnarray}
		\hat{\bm{\alpha}}_d^{(-i)} 
		\as \bm{\alpha}_d, \quad  
		\hat{\bm{\alpha}}_d^{*(-i)} 
		\as \bm{\alpha}_d, \quad  \bm{\alpha}_d:= \left(\sum_j \bm{A}_j \right)^{-1}\sum_{j=1}^J \xi_{0,jd} \bm{\mu}_j. \label{eq:alphaconsistency}
\end{eqnarray}
Eq. \eqref{eq:lemma:x^a_eigenvalue0} implies that
\begin{eqnarray}
	\bm{\alpha}_d^{\mathsf{T}}\left(\sum_{j=1}^Jc_j\bm{A}_j - \bm{\mu}_j\bm{\mu}_j^{\mathsf{T}}\right)\bm{\alpha}_d + \delta \bm{\beta}_d^{\mathsf{T}}\bm{\beta}_d = \frac{1}{1+\lambda_{0,d}}.   \label{eq:lamtrueeq}
\end{eqnarray}	
Combining Eqs. \eqref{eq:alphaconsistency} and \eqref{eq:lamtrueeq}, $\lambda_d^{*(-i)}$ defined in Eq. \eqref{eq:lambdaast_def} converges to the true eigenvalue, i.e.,
\begin{eqnarray*}
	\lambda_d^{*(-i)} \as \lambda_{0,d}.
\end{eqnarray*}
The proof is complete using the continuous mapping theorem.



\bibliographystyle{abbrvnat}
\bibliography{papers}

\begin{thebibliography}{24}
\providecommand{\natexlab}[1]{#1}
\providecommand{\url}[1]{\texttt{#1}}
\expandafter\ifx\csname urlstyle\endcsname\relax
  \providecommand{\doi}[1]{doi: #1}\else
  \providecommand{\doi}{doi: \begingroup \urlstyle{rm}\Url}\fi

\bibitem[Baudat and Anouar(2000)]{baudat2000generalized}
G.~Baudat and F.~Anouar.
\newblock Generalized discriminant analysis using a kernel approach.
\newblock \emph{Neural computation}, 12\penalty0 (10):\penalty0 2385--2404,
  2000.

\bibitem[Cawley and Talbot(2003)]{Cawley:2003ha}
G.~C. Cawley and N.~L.~C. Talbot.
\newblock {Efficient leave-one-out cross-validation of kernel fisher
  discriminant classifiers}.
\newblock \emph{Pattern Recognition}, 36\penalty0 (11):\penalty0 2585--2592,
  Nov. 2003.

\bibitem[Chen et~al.(2015)Chen, Chi, Ranola, and Lange]{chen2015convex}
G.~K. Chen, E.~C. Chi, J.~M.~O. Ranola, and K.~Lange.
\newblock Convex clustering: An attractive alternative to hierarchical
  clustering.
\newblock \emph{PLoS Comput Biol}, 11\penalty0 (5):\penalty0 e1004228, 2015.

\bibitem[Clemmensen et~al.(2011)Clemmensen, Hastie, Witten, and
  Ersb{\o}ll]{clemmensen2011sparse}
L.~Clemmensen, T.~Hastie, D.~Witten, and B.~Ersb{\o}ll.
\newblock Sparse discriminant analysis.
\newblock \emph{Technometrics}, 53\penalty0 (4):\penalty0 406--413, 2011.

\bibitem[Donnat and Holmes(2019)]{donnat2019convex}
C.~Donnat and S.~Holmes.
\newblock Convex hierarchical clustering for graph-structured data.
\newblock In \emph{2019 53rd Asilomar Conference on Signals, Systems, and
  Computers}, pages 1999--2006. IEEE, 2019.

\bibitem[Fisher(1936)]{fisher1936use}
R.~A. Fisher.
\newblock The use of multiple measurements in taxonomic problems.
\newblock \emph{Annals of eugenics}, 7\penalty0 (2):\penalty0 179--188, 1936.

\bibitem[Fukunaga(2013)]{fukunaga2013introduction}
K.~Fukunaga.
\newblock \emph{Introduction to statistical pattern recognition}.
\newblock Elsevier, 2013.

\bibitem[Harville(2008)]{BA8658116X}
D.~A. Harville.
\newblock \emph{Matrix algebra from a statistician's perspective}.
\newblock Springer, 2008.

\bibitem[Huang and Su(2013)]{Huang:2013is}
L.~Huang and L.~Su.
\newblock {Hierarchical Discriminant Analysis and Its Application}.
\newblock \emph{Communications in Statistics---Theory and Methods}, 42\penalty0
  (11):\penalty0 1951--1957, June 2013.

\bibitem[Kawano et~al.(2015)Kawano, Fujisawa, Takada, and
  Shiroishi]{kawano2015sparse}
S.~Kawano, H.~Fujisawa, T.~Takada, and T.~Shiroishi.
\newblock Sparse principal component regression with adaptive loading.
\newblock \emph{Computational Statistics \& Data Analysis}, 89:\penalty0
  192--203, 2015.

\bibitem[Kawano et~al.(2018)Kawano, Fujisawa, Takada, and
  Shiroishi]{Kawano:2018hf}
S.~Kawano, H.~Fujisawa, T.~Takada, and T.~Shiroishi.
\newblock {Sparse principal component regression for generalized linear
  models}.
\newblock \emph{Computational Statistics {\&} Data Analysis}, 124:\penalty0
  180--196, Aug. 2018.

\bibitem[Konishi and Kitagawa(2008)]{konishi2008information}
S.~Konishi and G.~Kitagawa.
\newblock \emph{Information criteria and statistical modeling}.
\newblock Springer Science \& Business Media, 2008.

\bibitem[Lachenbruch(1967)]{lachenbruch1967almost}
P.~A. Lachenbruch.
\newblock An almost unbiased method of obtaining confidence intervals for the
  probability of misclassification in discriminant analysis.
\newblock \emph{Biometrics}, pages 639--645, 1967.

\bibitem[Mika et~al.(1999)Mika, Ratsch, Weston, Scholkopf, and
  Mullers]{mika1999fisher}
S.~Mika, G.~Ratsch, J.~Weston, B.~Scholkopf, and K.-R. Mullers.
\newblock Fisher discriminant analysis with kernels.
\newblock In \emph{Neural networks for signal processing IX: Proceedings of the
  1999 IEEE signal processing society workshop}, pages 41--48. IEEE, 1999.

\bibitem[Rao(1948)]{rao1948utilization}
C.~R. Rao.
\newblock The utilization of multiple measurements in problems of biological
  classification.
\newblock \emph{Journal of the Royal Statistical Society. Series B
  (Methodological)}, 10\penalty0 (2):\penalty0 159--203, 1948.

\bibitem[Safo and Ahn(2016)]{safo2016general}
S.~E. Safo and J.~Ahn.
\newblock General sparse multi-class linear discriminant analysis.
\newblock \emph{Computational Statistics \& Data Analysis}, 99:\penalty0
  81--90, 2016.

\bibitem[Shao et~al.(2011)Shao, Wang, Deng, Wang, et~al.]{shao2011sparse}
J.~Shao, Y.~Wang, X.~Deng, S.~Wang, et~al.
\newblock Sparse linear discriminant analysis by thresholding for high
  dimensional data.
\newblock \emph{Annals of Statistics}, 39\penalty0 (2):\penalty0 1241--1265,
  2011.

\bibitem[Takada et~al.(2008)Takada, Mita, Maeno, Sakai, Shitara, Kikkawa,
  Moriwaki, Yonekawa, and Shiroishi]{takada2008mouse}
T.~Takada, A.~Mita, A.~Maeno, T.~Sakai, H.~Shitara, Y.~Kikkawa, K.~Moriwaki,
  H.~Yonekawa, and T.~Shiroishi.
\newblock Mouse inter-subspecific consomic strains for genetic dissection of
  quantitative complex traits.
\newblock \emph{Genome research}, 18\penalty0 (3):\penalty0 500--508, 2008.

\bibitem[Witten and Tibshirani(2011{\natexlab{a}})]{Witten:2011kc}
D.~M. Witten and R.~Tibshirani.
\newblock {Penalized classification using Fisher's linear discriminant}.
\newblock \emph{Journal of the Royal Statistical Society: Series B (Statistical
  Methodology)}, 73\penalty0 (5):\penalty0 753--772, 2011{\natexlab{a}}.

\bibitem[Witten and Tibshirani(2011{\natexlab{b}})]{witten2011penalized}
D.~M. Witten and R.~Tibshirani.
\newblock Penalized classification using fisher's linear discriminant.
\newblock \emph{Journal of the Royal Statistical Society: Series B (Statistical
  Methodology)}, 73\penalty0 (5):\penalty0 753--772, 2011{\natexlab{b}}.

\bibitem[Wu et~al.(2017)Wu, Shen, and van~den Hengel]{Wu:2017ia}
L.~Wu, C.~Shen, and A.~van~den Hengel.
\newblock {Deep linear discriminant analysis on fisher networks: A hybrid
  architecture for person re-identification}.
\newblock \emph{Pattern Recognition}, 65\penalty0 (C):\penalty0 238--250, May
  2017.

\bibitem[Xu et~al.(2001)Xu, Zhang, and Li]{Xu:2001vi}
J.~Xu, X.~Zhang, and Y.~Li.
\newblock {Kernel MSE algorithm: A unified framework for KFD, LS-SVM and KRR}.
\newblock In \emph{Proceedings of the International Joint Conference on Neural
  Networks}, pages 1486--1491. Tsinghua University, Beijing, China, Jan. 2001.

\bibitem[Yamamoto and Terada(2014)]{yamamoto2014functional}
M.~Yamamoto and Y.~Terada.
\newblock Functional factorial k-means analysis.
\newblock \emph{Computational statistics \& data analysis}, 79:\penalty0
  133--148, 2014.

\bibitem[Ye(2007)]{ye2007least}
J.~Ye.
\newblock Least squares linear discriminant analysis.
\newblock In \emph{Proceedings of the 24th international conference on Machine
  learning}, pages 1087--1093, 2007.

\end{thebibliography}
\end{document}